\newtheorem{theorem}{Theorem}[section]
\newtheorem{proposition}[theorem]{Proposition}
\theoremstyle{remark}
\theoremstyle{definition}
\newtheorem{definition}[theorem]{Definition}
\theoremstyle{example}
\newtheorem{example}[theorem]{Example}
\theoremstyle{notation}
\newcommand{\bra}[1]{\langle#1|}
\newcommand{\ket}[1]{|#1\rangle}
\begin{document}

\title{The Choquet integral as an approximation to  density matrices with incomplete information}            
\author{A. Vourdas}
\affiliation{Department of Computer Science,\\
University of Bradford, \\
Bradford BD7 1DP, United Kingdom\\a.vourdas@bradford.ac.uk}

\begin{abstract}
A total set of $n$ states $\ket{i}$ and the corresponding projectors $\Pi(i)=\ket{i} \bra{i}$  are considered,
in a quantum system with $d$-dimensional Hilbert space $H(d)$.
A partially known density matrix $\rho$ with given $p(i)={\rm Tr}[\rho \Pi(i)]$  (where $i=1,...,n$ and $d\le n\le d^2-1$) is
considered, and its ranking permutation is defined. It is used to calculate the Choquet integral ${\cal C}(\rho)$ which is a positive semi-definite Hermitian matrix.
Comonotonicity is an important concept in the formalism, which is used to formalise the vague concept of
physically similar density matrices. It is shown that
${\cal C}(\rho)/{\rm Tr}[{\cal C}(\rho)]$ is  a density matrix which is a good approximation to the partially known density matrix $\rho$.
\end{abstract}
\maketitle

\section{Introduction}

The basic property of Kolmogorov probabilities is additivity. For exclusive events ($A\cap B=\emptyset$) we get $\mu(A\cup B)=\mu(A)+\mu(B)$.
Choquet \cite{C} introduced the weaker concept of capacities or non-additive probabilities.
They formalize the added value in an aggregation, where the `whole is greater than the sum of its parts'.
For example, if  two political parties get $10\%$ and $15\%$ of the votes, their coalition might get more (or less) than the sum of $25\%$ of the votes. 
Another example is two workers who produce $N_1, N_2$ items of a product per hour when they work individually, if they collaborate they  might produce more (or less) than $N_1+N_2$ items.
Lack of additivity requires a revision of the concept of integration, and this leads to Choquet integration,
which has been used extensively in Artificial Intelligence, Operations Research, Mathematical Economics, etc (e.g., \cite{D1,D2,D3,D4,D5}).

Choquet integrals are described with cumulative functions. In additive probabilities  the derivatives of cumulative functions are equal to the probability distributions.
In non-additive probabilities (capacities),  the derivatives of cumulative functions are in general different from the probability distributions.
The Choquet integral is a weighted average, with weights  the derivatives of cumulative functions.
This takes into account that an aggregation is different from the sum of its parts.
We note that in order to define a cumulative function, we need to have an order of the various alternatives.
In the finite case considered here, there are many orders and a choice needs to be made (as discussed below).

In a quantum context Choquet integrals have been used in \cite{C1,C2}. 
Here the need for  non-additive probabilities is seen with the following argument. Let $v_1, v_2$ be two vectors and $\Pi_1, \Pi_2$ the corresponding projectors.
Also let $\Pi_{12}$ be the projector to the two-dimensional space spanned by the $v_1,v_2$.
Then in general $\Pi_{12}\ne \Pi_1+\Pi_2$ (only for orthogonal vectors this is equality).
Projectors are intimately related to quantum probabilities, and this shows the need for the use of theories based on non-additive probabilities in a quantum context.

In this paper we consider a quantum system with $d$-dimensional Hilbert space $H(d)$.
In it we select a  total set of $n$ states $\ket{i}$ ($i=1,...,n$) and the corresponding projectors $\Pi(i)=\ket{i} \bra{i}$.
Let $\rho$ be an unknown density matrix with given $p(i)={\rm Tr}[\rho \Pi(i)]$  (where $i=1,...,n$ and $d\le n\le d^2-1$).
In this case we have incomplete information about $\rho$.
The corresponding Choquet integral ${\cal C}(\rho)$ is defined and is shown to be a positive semi-definite matrix.
The ${\cal C}(\rho)/{\rm Tr}[{\cal C}(\rho)]$ is shown to be a good (in the sense defined later) approximation to the partially known density matrix $\rho$.

Dealing with incomplete data is an important problem in both classical and quantum physics. 
An important approach that has been studied extensively in a classical context is the maximum entropy method (introduced in \cite{E1}). 
Various entropies have also been used for a similar purpose in a quantum context (e.g., \cite{E2,E3,E4,E5}).
In this paper we use a very different technique based on Choquet integration.

An important concept related to Choquet integrals, is comonotonicity of two density matrices $\rho _1, \rho _2$. There is added value
in an aggregation of components with different properties,  
because the various components play complementary role to each other.
In this case the whole is different from the sum of its parts, and the ${\cal C}[\lambda\rho _1+(1-\lambda) \rho _2]$ is different from $\lambda{\cal C}(\rho _1)+(1-\lambda){\cal C}(\rho_2)$.
But if the components of an aggregation have similar properties, this complementarity and added value are missing, the whole is equal to the sum of its parts, and
${\cal C}[\lambda\rho _1+(1-\lambda) \rho _2]$ is equal to $\lambda{\cal C}(\rho _1)+(1-\lambda){\cal C}(\rho_2)$.
Comonotonicity defines rigorously the intuitive concept of physically similar density matrices.

In section 2, we introduce capacities and Choquet integrals in a classical context.
There is much literature on these concepts in other than Physics areas,
and here we present briefly the concepts that we are going to bring into Quantum Physics.
In section 3, we introduce technical details (cumulative  projectors and their discrete derivatives,
M\"obius operators, etc) which are needed in the calculation of the Choquet integral.

In section 4, we introduce the Choquet integral ${\cal C}(\rho)$ of a density matrix $\rho$, and study its properties.
In section 5, we introduce  comonotonicity which is an important concept within Choquet integration.
Comonotonicity splits the set of density matrices into equivalence classes, and within each of these classes we define a total preorder.

In section 6, we explain that the ${\cal C}(\rho)/{\rm Tr}[{\cal C}(\rho)]$ is a good approximation to the partially known density matrix $\rho$, in the sense that they are weakly comonotonic to each other.
We conclude in section 7 with a discussion of our results.

\section{Capacities and Choquet integrals in a classical context}
\subsection{Capacities  or non-additive probabilities}

Let $\Omega$ be the set of all possible outcomes of an experiment (sample space).
Kolmogorov probability is a map $\mu$ from subsets of $\Omega$ (events) to $[0,1]$, such that $\mu (\emptyset )=0$, and also
\begin{eqnarray}\label{1}
\mu(A\cup B)-\mu(A)-\mu(B)+\mu(A\cap B)=0;\;\;\;A,B \subseteq \Omega.
\end{eqnarray}
In the case $A\cap B=\emptyset$ this reduces to the additivity relation
\begin{eqnarray}\label{B}
A\cap B=\emptyset\;\;\rightarrow\;\;\mu (A\cup B)= \mu (A)+\mu (B).
\end{eqnarray}
We might also have a normalization condition  $\mu (\Omega)=1$, but it is not essential.

Capacity or nonadditive probability, replaces Eq.(\ref{1}) with  the weaker relation
\begin{eqnarray}\label{A}
A \subseteq B\;\rightarrow\;\mu (A)\le \mu (B)
\end{eqnarray}
In this case the additivity relation of Eq.(\ref{B}) does not hold in general.

\subsection{Cumulative functions and their discrete derivatives}

We consider the case where $\Omega$ is the finite set
\begin{eqnarray}
\Omega=\{1,...,n\}.
\end{eqnarray}
We note that due to the non-additivity property, in general 
\begin{eqnarray}\label{bnn}
\sum _{i=1}^n \mu(\{i\})\ne 1.
\end{eqnarray}
We consider a permutation $\sigma$ of the elements of $\Omega$:
\begin{eqnarray}\label{1A}
\sigma(1),\sigma(2),...,\sigma(n).
\end{eqnarray}
Then the 
\begin{eqnarray}\label{1B}
&&m_\sigma (n)=\mu[\{\sigma(n)\}],\nonumber\\
&&m_\sigma(n-1)= \mu[\{\sigma(n-1), \sigma(n)\}]\nonumber\\
&&........... \nonumber\\
&&m_\sigma (1)=\mu [\{\sigma(1), ...,\sigma(n)\}]=1.
\end{eqnarray}
is cumulative function (starting from the right which is the convention in this area), and depends on the permutation $\sigma$.
Since the set $\Omega$ is finite, there is no `natural ordering' of its elements, and the cumulative function depends on the ordering $\sigma$ of its elements.
The number of possible orderings is $n!$.

We also define the `discrete derivative' of the cumulative function, which also depends on $\sigma$:
\begin{eqnarray}\label{vb}
\nu _\sigma (i)&=&m_\sigma(i)-m_\sigma (i+1);\;\;\;\;i=1,...,n-1\nonumber\\
\nu _\sigma(n)&=&m_\sigma (n).
\end{eqnarray}
It is easily seen that
\begin{eqnarray}\label{pup}
\sum_{i=1}^n\nu _\sigma(i)=1.
\end{eqnarray}
We note that for additive probabilities  $\nu_\sigma(i)= \mu[\sigma (i)]$,
but for non-additive capacities this is not true.
Also for non-additive probabilities the sum of $\mu(i)$ is in general different from $1$ (Eq.(\ref{bnn})), while the sum of  $\nu_\sigma(i)$ is equal to $1$ (Eq.(\ref{pup})).

\subsection{Choquet integrals}

The starting point of integration is the additivity relation in Eq.(\ref{B}).
Choquet revised the concept of integration for capacities.

We consider a function $f$ on the finite set $\Omega$, which takes the  real non-negative values $f(1),...,f(n)$.
We call ${\cal L}_n$ the set of these functions.
We relabel this function, using a ranking  permutation $\tau$ of the indices, so that 
\begin{eqnarray}\label{234}
f[\tau (1)]\le ...\le f[\tau (n)].
\end{eqnarray}
The Choquet integral of $f$ with respect to the capacity $\mu$ is given by
\begin{eqnarray}\label{87}
{\cal C}(f;\mu)&=&\sum _{i=1}^{n} f[\tau (i)]\nu _\tau (i)\nonumber\\&=&f[\tau (1)]m_\tau (1)+\{f[\tau (2)]-f[\tau(1)]\}m_\tau (2)+...+
\{f[\tau (n)]-f[\tau(n-1)]\}m_\tau (n)
\end{eqnarray}
We use here the discrete derivatives  $\nu _\tau (i)$ of the cumulative functions, related to the ranking permutation $\tau$ of the function $f$.
In a weighted average we multiply the values of a function with the corresponding probabilities.
In a Choquet integral we replace the probabilities with discrete derivatives  of cumulative functions. 

For Kolmogorov probabilities the derivatives of cumulative functions are  probability distributions, or in the discrete case
discrete derivatives  of cumulative functions are equal to the probabilities.
This is no longer true for capacities, and the distinction between the two is important for Choquet integration.

\subsection{Comonotonic functions}\label{nnmm}
 In general two functions have different ranking and  consequently
\begin{eqnarray}
{\cal C}(f+g;\mu)\ne {\cal C}(f;\mu)+{\cal C}(g;\mu).
\end{eqnarray}

Comonotonic functions\cite{D1,D2,D3,D4,D5}, are functions with the same ranking permutation $\tau$. 
In this case the $\nu_\tau (i)$,  are the same for the two functions. Therefore
for comonotonic functions $f,g$ and positive $a,b$
\begin{eqnarray}\label{301}
{\cal C}(af+bg;\mu)= a{\cal C}(f;\mu)+b{\cal C}(g;\mu).
\end{eqnarray}

A constant function $c$ is comonotonic with any function $f$ and therefore
\begin{eqnarray}
{\cal C}(f+c;\mu)={\cal C}(f;\mu)+c.
\end{eqnarray}

Comonotonicity is not transitive in the set ${\cal L}_n$. 
For example, the function $f(i)=1$ is comonotonic to every function $g(i)$ and yet there are functions $g_1(i), g_2(i)$ which are not comonotonic.
Transitivity fails by degeneracies, i.e., equalities in Eq.(\ref{234}).
We define a subset of ${\cal L}_n$ where comonotonicity is transitive.
\begin{definition}
${\cal L}_n^\prime $ is a subset of ${\cal L}_n$ which contains functions such that $f(i)\ne f(j)$ for all $i,j=1,...,n$ (i.e., we have only strict inequalities in Eq.(\ref{234})).
\end{definition}
Comonotonicity is transitive in  ${\cal L}_n^\prime $.
Consequently, comonotonicity is an equivalence relation in ${\cal L}_n^\prime $, and partitions it into $n!$ equivalence classes. 
Each equivalence class has its own ranking permutation and its own set of discrete derivatives of the cumulative functions $\{\nu _\sigma (i)\;|\;i=1,...,n\}$.
The set ${\cal L}_n \setminus{\cal L}_n^\prime $ contains the `boundary functions' (with equalities in Eq.(\ref{234})).

\subsection{The Choquet integral in terms of M\"obius transforms of the capacities}\label{MOB}

The M\"obius transform is used extensively in Combinatorics, 
after the work by Rota\cite{R}. It is a generalization of
the inclusion-exclusion principle that gives the cardinality of the union of overlaping sets.
Rota generalized this to partially ordered structures.

The M\"obius transform of capacities leads to the following function ${\mathfrak d} (A)$ (where $A\subseteq \Omega$)\cite{VO2}:
\begin{eqnarray}\label{M}
{\mathfrak d} (A)=\sum  _{B\subseteq A}(-1)^{|A|-|B|} \mu(B),
\end{eqnarray}
where $|A|$, $|B|$ are the cardinalities of these sets.
For example, if $A=\{a_1,...,a_m\}$, then
\begin{eqnarray}
{\mathfrak d}(\{a_i\})&=&\mu (\{a_i\});\;\;\;\;\;
{\mathfrak d} (\{a_i,a_j\})=\mu (\{a_i,a_j\})-\mu(\{a_i\})-\mu(\{a_j\})
\end{eqnarray}
etc.
The inverse M\"obius transform is
\begin{eqnarray}\label{b7}
\mu (A)=\sum _{B\subseteq A}{\mathfrak d} (B).
\end{eqnarray}

The following proposition expresses the Choquet integral in terms of  M\"obius transforms of the capacities, and is given without proof\cite{D4,D5,C1}.
\begin{proposition}\label{PRO12}
The Choquet integral of Eq.(\ref{87}) is given in terms of the M\"obius transform ${\mathfrak d}$ of the capacities $\mu$, as
\begin{eqnarray}\label{88}
{\cal C}(f; \mu)&=&\sum _{A\subseteq \Omega} {\mathfrak d} (A){\min}[f(A)].
\end{eqnarray}
For all subsets $A$ of $\Omega$, we multiply ${\mathfrak d} (A)$ with the minimum value of the function in this subset, and we add the results.
\end{proposition}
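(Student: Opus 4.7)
The plan is to rewrite the Choquet integral in its ``Abel summation'' form and then substitute the inverse M\"obius relation for each cumulative value $m_\tau(i)$, after which the desired identity will fall out by exchanging the order of summation.

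First I would return to the second line of Eq.(\ref{87}) and observe that, with the convention $f[\tau(0)]=0$, the Choquet integral can be written as the telescoped sum
\begin{equation*}
{\cal C}(f;\mu)=\sum_{i=1}^{n}\bigl(f[\tau(i)]-f[\tau(i-1)]\bigr)\,m_\tau(i),
\qquad m_\tau(i)=\mu(S_i),\qquad S_i=\{\tau(i),\tau(i+1),\ldots,\tau(n)\}.
\end{equation*}
The advantage of this rewriting is that each $m_\tau(i)$ is directly a value of the capacity on a concrete subset, so the inverse M\"obius formula (\ref{b7}) can be applied to it: $m_\tau(i)=\sum_{B\subseteq S_i}{\mathfrak d}(B)$. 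Substituting this into the displayed formula gives a double sum over $i$ and $B$.

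The key step is then to interchange the order of summation. For a fixed $B\subseteq\Omega$, the index $i$ for which $B\subseteq S_i$ ranges from $i=1$ up to $i^*(B):=\min\{i:\tau(i)\in B\}$, i.e.\ the ranking position of the element of $B$ on which $f$ attains its smallest value. Because the prefix sum over $i=1,\ldots,i^*(B)$ of the differences $f[\tau(i)]-f[\tau(i-1)]$ telescopes to $f[\tau(i^*(B))]$, and because $\tau$ is by construction the ranking permutation, this prefix value is exactly $\min[f(B)]$. Collecting coefficients yields ${\cal C}(f;\mu)=\sum_{B\subseteq\Omega}{\mathfrak d}(B)\min[f(B)]$, which is (\ref{88}); the empty set contributes zero since ${\mathfrak d}(\emptyset)=\mu(\emptyset)=0$ and may be omitted.

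I do not expect a genuine obstacle here: the argument is essentially bookkeeping, and the only step requiring care is to verify that $i^*(B)$ is well defined and that $\tau(i^*(B))$ really is the $f$-minimiser in $B$, which follows immediately from the monotonicity $f[\tau(1)]\le\cdots\le f[\tau(n)]$ built into the ranking permutation $\tau$. If degeneracies in (\ref{234}) are a concern, I would note that any choice of tie-breaking ranking produces the same minimum over each $B$, so the final formula is independent of that choice and the proposition holds on all of ${\cal L}_n$, not only on ${\cal L}_n'$.
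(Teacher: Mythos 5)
Your proof is correct. Note that the paper itself states Proposition \ref{PRO12} explicitly ``without proof,'' deferring to the cited literature, so there is no in-paper argument to compare against; your derivation supplies the standard one. The three steps all check out: (i) the telescoped form with $f[\tau(0)]=0$ is exactly the second line of Eq.(\ref{87}); (ii) substituting $m_\tau(i)=\mu(S_i)=\sum_{B\subseteq S_i}{\mathfrak d}(B)$ is a legitimate use of Eq.(\ref{b7}); and (iii) since the chain $S_1\supseteq S_2\supseteq\cdots\supseteq S_n$ is decreasing, the condition $B\subseteq S_i$ holds precisely for $i\le i^*(B)=\min\{j:\tau(j)\in B\}$, so the inner sum telescopes to $f[\tau(i^*(B))]$, which equals $\min[f(B)]$ because $f\circ\tau$ is non-decreasing. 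Your closing remarks are also right and worth keeping: the empty set drops out because ${\mathfrak d}(\emptyset)=\mu(\emptyset)=0$, and since the right-hand side of Eq.(\ref{88}) makes no reference to $\tau$, the argument simultaneously shows that the Choquet integral is independent of how ties in Eq.(\ref{234}) are broken, so the identity holds on all of ${\cal L}_n$ and not merely on ${\cal L}_n^\prime$.
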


\subsection{Example}\label{EXA23}
Three students labelled $1,2,3$  collaborate on a project. The
\begin{eqnarray}
&&\mu(1)=0.05;\;\;\;\mu(2)=0.1;\;\;\;\mu(3)=0.15\nonumber\\
&&\mu(1,2)=0.2;\;\;\;\mu(1,3)=0.25;\;\;\;\mu(2,3)=0.2;\;\;\;\mu(1,2,3)=0.3
\end{eqnarray}
show the percentage of the project, that a collaboration of any subset of these students completes, per unit of time (e.g., one hour).
For example the student $2$ on his own completes $0.1$ of the project, the collaboration of students $1,2$ completes $0.2$ of the project, etc. 
It is clear that in a collaboration $\mu(i,j)\ne \mu(i)+\mu(j)$ in general, and this shows the need for capacities. 
For example, $\mu(1,2)> \mu(1)+\mu(2)$ because the collaboration of the students $1,2$ is very sucessful.
On the other hand, $\mu(2,3)< \mu(2)+\mu(3)$ because the collaboration of the students $2,3$ is not very sucessful.
In this example, the $\mu(1,2,3)=0.3$ (it is not normalized to $1$).

All three students start working on the project together at $t=0$, but at $t=\lambda$ (where $\lambda <3$) , student $2$ leaves and the remaining students $1,3$ continue to work together until $t=3$.
Then the student $3$ leaves and the remaning student $1$ works on his own until $t=5$. The number of hours that each student worked is
\begin{eqnarray}
f(1)=5;\;\;\;f(2)=\lambda;\;\;\;f(3)=3.
\end{eqnarray}
We will calculate what percentage of the project has been completed, using the Choquet integral.

Since $\lambda <3$, we have $f(2)<f(3)<f(1)$ and therefore the ranking permutation is
\begin{eqnarray}
\tau(1)=2;\;\;\;\tau(2)=3;\;\;\;\tau(3)=1.
\end{eqnarray}
As $\lambda $ varies with $\lambda <3$, we get various functions $f(i)$ which are comonotonic.

Then
\begin{eqnarray}
m_\tau (3)=\mu(1)=0.05;\;\;\;m_\tau (2)= \mu (1,3)=0.25;\;\;\;m_\tau (1)=\mu(1,2,3)=0.3,
\end{eqnarray}
and 
\begin{eqnarray}
{\cal C}=f[\tau (1)]m_\tau (1)+\{f[\tau (2)]-f[\tau(1)]\}m_\tau (2)+\{f[\tau (3)]-f[\tau(2)]\}m_\tau (3)=0.85+0.05\lambda.
\end{eqnarray}
Alternatively, we can calculate the
\begin{eqnarray}
\nu _\tau (1)=m_\tau(1)-m_\tau (2)=0.05;\;\;\;\nu _\tau (2)=m_\tau(2)-m_\tau (3)=0.2;\;\;\;\nu_\tau(3)=m_\tau(3)=0.05.
\end{eqnarray}
and then
\begin{eqnarray}
{\cal C}=f[\tau(1)]\nu _\tau (1)+f[\tau(2)]\nu _\tau (2)+f[\tau(3)]\nu _\tau (3)=0.85+0.05\lambda
\end{eqnarray}
The Choquet integral gives the percentage of the project that has been completed, as a function of $\lambda$.

This example should be compared and contrasted with the following example which involves the same students.
All three students start working on the project together at $t=0$, but at $t=\lambda$ (where $\lambda <3$) , student $3$ leaves and the remaining students $1,2$ continue to work together until $t=3$.
Then the student $2$ leaves and the remaning student $1$ works on his own until $t=5$. In this case
\begin{eqnarray}
f(1)=5;\;\;\;f(3)=\lambda;\;\;\;f(2)=3,
\end{eqnarray}
and $f(3)<f(2)<f(1)$. Therefore the ranking permutation is
\begin{eqnarray}
\sigma (1)=3;\;\;\;\sigma(2)=2;\;\;\;\sigma(3)=1.
\end{eqnarray}
Then
\begin{eqnarray}
m_\sigma (3)=\mu(1)=0.05;\;\;\;m_\sigma (2)= \mu (1,2)=0.2;\;\;\;m_\sigma (1)=\mu(1,2,3)=0.3,
\end{eqnarray}
and 
\begin{eqnarray}
{\cal C}=f[\sigma(1)]m_\sigma (1)+\{f[\sigma (2)]-f[\sigma(1)]\}m_\sigma (2)+\{f[\sigma (3)]-f[\sigma(2)]\}m_\sigma (3)=0.7+0.1\lambda.
\end{eqnarray}
Alternatively, we can calculate the
\begin{eqnarray}
\nu _\sigma (1)=m_\sigma(1)-m_\sigma (2)=0.1;\;\;\;\nu _\sigma (2)=m_\sigma(2)-m_\sigma (3)=0.15;\;\;\;\nu_\sigma(3)=m_\sigma(3)=0.05.
\end{eqnarray}
and then
\begin{eqnarray}
{\cal C}=f[\sigma(1)]\nu _\sigma (1)+f[\sigma(2)]\nu _\sigma (2)+f[\sigma(3)]\nu _\sigma (3)=0.7+0.1\lambda.
\end{eqnarray}

The two examples show the need for the ranking permutations and also for the comonotonic functions and the equivalence classes related to them.
In each example, as $\lambda$ varies with $\lambda < 3$, we get physically similar cases formalized with comonotonic functions and with the fact that they all belong to the same equivalence class.
But as we go from the first example to the second one, we get physically different cases formalized with change in the equivalence class.

\section{Cumulative projectors and their discrete derivatives}\label{ASD}

In a $d$-dimensional Hilbert space $H(d)$, we consider a reference set of $n\ge d$ states 
\begin{eqnarray}\label{sig}
\Sigma =\{\ket{i}\;|\;i\in \Omega\};\;\;\;\Omega=\{1,...,n\}
\end{eqnarray}
such that:
\begin{itemize}
\item
Any subset of $d$ of these states, are linearly independent.
\item
$\Sigma$ and also any of its subsets with $r\ge d$ of these states, are total sets. We recall, that a
 set of vectors in a Hilbert space is total, if there is no vector which is orthogonal to all vectors in the set.
\end{itemize}
We emphasize from the outset that there is redundancy in $\Sigma$ because it has more vectors than the dimension of the space.
Redundancy is essential in noisy situations, and this is indeed the merit of this approach\cite{C1}.

Let $h_1, h_2$ be two subspaces of $H(d)$. Their disjunction is
\begin{eqnarray}\label{V1}
h_1\vee h_2={\rm span}(h_1\cup h_2).
\end{eqnarray}
This is the quantum OR operation and includes all superpositions of vectors in the two spaces (unlike the Boolean OR which is simply the union of sets).
Their conjunction is the logical AND
\begin{eqnarray}\label{V2}
h_1\wedge h_2=h_1\cap h_2.
\end{eqnarray}

Let $H(\{i\})$ be the one-dimensional subspace that contains the vector $\ket{i}$, 
and $H(A)$ be the subspace spanned by all the states $\ket{i}$ with $i\in A\subseteq \Omega$:
\begin{eqnarray}
H(A)=\bigvee _{i\in A}H(\{i\}).
\end{eqnarray}
We call $\Pi(A)$ the projector to the subspace $H(A)$. In particular
\begin{eqnarray}
\Pi(\{i\})=\ket{i}\bra{i};\;\;\;\Pi(\emptyset)=0.
\end{eqnarray}
There are $2^n$ projectors $\Pi(A)$. 
If $|A|\ge d$ then $\Pi(A)={\bf 1}$, so some of these $2^n$ projectors are equal to ${\bf 1}$.
Also
\begin{eqnarray}
&&{\rm Tr}[\Pi(A)]=|A|\;\;{\rm if}\;\;|A|<d\nonumber\\
&&{\rm Tr}[\Pi(A)]=d\;\;{\rm if}\;\;|A|\ge d.
\end{eqnarray}

We note that
\begin{eqnarray}
H(A\cup B)=H(A)\vee H(B),
\end{eqnarray}
but in general
\begin{eqnarray}\label{2A}
\Pi(A\cup B)\ne \Pi(A)+ \Pi(B),
\end{eqnarray}
and in particular
\begin{eqnarray}\label{377}
\Pi(A)\ne \sum _{i\in A}\Pi(\{i\}).
\end{eqnarray}
Only if the kets $\ket{i}$ where $i\in A$ are orthogonal to each other, we get equality in this equation.
 The M\"obius operators below, quantify the difference between the two sides in equations like (\ref{2A}), (\ref{377}). 

The trace of the projectors $\Pi(A)$ times a density matrix, gives capacities (non-additive probabilities).
In this sense, the projectors $\Pi(A)$ are the quantum analogue of the capacities $\mu$ in the classical case.
We note that the $\Pi(A), \Pi(B)$ do not commute in general, and our apprach of using capacities and Choquet integrals, provides an alternative complementary methodology to non-commutativity. 

In practical calculations, the projectors $\Pi(A)$ can be calculated as follows. 
We express the vectors $\ket{i}$ where $i\in A$, as $d\times 1$ columns, and then write the $d\times |A|$ matrix ${\mathfrak A}$ which has as columns these vectors.
The projector $\Pi(A)$ is given by
\begin{eqnarray}\label{670A}
\Pi(A)={\mathfrak A}({\mathfrak A}^\dagger {\mathfrak A})^{-1}{\mathfrak A}^\dagger.
\end{eqnarray}

We next consider a permutation $\sigma$ of the elements of $\Omega$ (which is also  a permutation of the states in $\Sigma$) and in analogy to the cumulative function in Eq.(\ref{1B}) we introduce the cumulative projectors
(starting from the right):
\begin{eqnarray}\label{1C}
&&\pi_\sigma (n)=\Pi[\{\sigma(n)\}],\nonumber\\
&&\pi_\sigma(n-1)= \Pi[\{\sigma(n-1), \sigma(n)\}]\nonumber\\
&&........... \nonumber\\
&&\pi_\sigma (n-d+1)= \Pi[\{\sigma(n-d+1),...,\sigma(n)\}]={\bf 1}\nonumber\\
&&........... \nonumber\\
&&\pi_\sigma (1)=\Pi [\{\sigma(1), ...,\sigma(n)\}]={\bf 1}.
\end{eqnarray}
In analogy to Eq.(\ref{vb}) we also define the `discrete derivatives' of the cumulative projectors, which also depend on $\sigma$:
\begin{eqnarray}
\varpi _\sigma (i)&=&\pi_\sigma(i)-\pi_\sigma (i+1);\;\;\;\;i=1,...,n-1\nonumber\\
\varpi _\sigma(n)&=&\pi_\sigma (n).
\end{eqnarray}
It is easily seen  that 
\begin{eqnarray}
\varpi _\sigma (i)=0;\;\;\;\;i=1,...,n-d.
\end{eqnarray}
and also that
\begin{eqnarray}
\sum_{i=n-d+1}^n\varpi _\sigma(i)=1;\;\;\;\varpi _\sigma(i)\varpi _\sigma(j)=\delta _{ij}\varpi _\sigma(i);\;\;\;i,j=n-d+1,....,n.
\end{eqnarray}
The $\varpi _\sigma(i)$ (with $i=n-d+1,....,n$) are a complete set of orthogonal projectors in $H(d)$.

\section{The Choquet integral in a quantum context}\label{BBB}

Let ${\cal M}_d$ be the set of all density matrices of quantum systems with Hilbert space $H(d)$.
For $\rho \in {\cal M}_d$, let
\begin{eqnarray}
 p(i)={\rm Tr}[\rho \Pi (\{i\})];\;\;\;i=1,...,n;\;\;\;d\le n\le d^2-1.
\end{eqnarray}
Each $p(i)$ is a probability. The various $p(i)$ can be measured using different ensembles described by the same density matrix $\rho$ (because the $ \Pi (\{i\})$ do not commute in general).

The following proposition gives bounds for the $\sum _{i=1}^n p(i)$.
\begin{proposition}\label{PRO38}
Let ${\cal E}_{\rm min}$, ${\cal E}_{\rm max}$ be the minimum and maximum eigenvalues of the density matrix
\begin{eqnarray}
Q=\frac{1}{n}\sum _{i=1}^n\Pi (\{i\}).
\end{eqnarray}
Then
\begin{eqnarray}\label{BB}
{\cal E}_{\rm max}\ge \frac{1}{n}\sum _{i=1}^n p(i)\ge {\cal E}_{\rm min}.
\end{eqnarray}
\end{proposition}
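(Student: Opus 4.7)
The plan is to recognize the quantity $\frac{1}{n}\sum_{i=1}^n p(i)$ as the expectation value $\mathrm{Tr}[\rho Q]$, and then to bound this by the extreme eigenvalues of $Q$ via a standard convexity argument.

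First I would verify that $Q$ is itself a legitimate density matrix. It is a convex combination of rank-one projectors, hence positive semi-definite, and its trace is $\frac{1}{n}\sum_{i=1}^n \mathrm{Tr}[\Pi(\{i\})] = \frac{1}{n}\cdot n = 1$. In particular its eigenvalues ${\cal E}_{\min}$ and ${\cal E}_{\max}$ are non-negative reals lying in $[0,1]$.

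Next, using the linearity of the trace, I would rewrite
\begin{equation}
\frac{1}{n}\sum_{i=1}^n p(i) = \frac{1}{n}\sum_{i=1}^n \mathrm{Tr}[\rho\,\Pi(\{i\})] = \mathrm{Tr}\!\left[\rho\cdot\frac{1}{n}\sum_{i=1}^n \Pi(\{i\})\right] = \mathrm{Tr}[\rho Q].
\end{equation}
This converts the statement into the claim that $\mathcal{E}_{\min}\le \mathrm{Tr}[\rho Q]\le \mathcal{E}_{\max}$, which is a well-known inequality between the expectation value of a Hermitian observable in any quantum state and its spectral extremes.

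To finish, I would invoke the spectral decomposition $Q=\sum_k {\cal E}_k \ket{k}\bra{k}$ and write
\begin{equation}
\mathrm{Tr}[\rho Q] = \sum_k {\cal E}_k \bra{k}\rho\ket{k}.
\end{equation}
Since $\rho$ is a density matrix, the numbers $w_k:=\bra{k}\rho\ket{k}$ are non-negative and satisfy $\sum_k w_k = \mathrm{Tr}[\rho] = 1$. Thus $\mathrm{Tr}[\rho Q]$ is a convex combination of the eigenvalues ${\cal E}_k$, and therefore lies between $\mathcal{E}_{\min}$ and $\mathcal{E}_{\max}$. There is no real obstacle here: the proof is essentially the identification of the average of the $p(i)$ with $\mathrm{Tr}[\rho Q]$, followed by the elementary variational bound on expectation values. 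The only mild subtlety worth flagging is to confirm that $Q$ is a bona fide density matrix so that the symmetric formulation of the bounds as ``eigenvalues of a density matrix" is justified.
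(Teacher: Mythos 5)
Your proof is correct and follows essentially the same route as the paper: identify $\frac{1}{n}\sum_i p(i)$ with ${\rm Tr}(\rho Q)$, confirm $Q$ is a density matrix, and use the spectral decomposition of $Q$ to express ${\rm Tr}(\rho Q)$ as a convex combination of its eigenvalues, which is therefore bounded by ${\cal E}_{\rm min}$ and ${\cal E}_{\rm max}$. No substantive differences.
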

\begin{proof}
$Q$ is a $d\times d$ positive semi-definite  matrix with trace $1$, i.e., a density matrix.
Let ${\cal E}_j$, $q_j$ ($j=1,...,d$) be the eigenvalues and eigenprojectors of $Q$.
\begin{eqnarray}
Q=\sum _{j=1}^d{\cal E}_jq_j;\;\;\;\sum _{j=1}^dq_j={\bf 1}.
\end{eqnarray}
The states in the set $\Sigma$ in Eq.(\ref{sig}) form a total set, and from this follows that  $\sum _{j=1}^dq_j={\bf 1}$.

Then for any density matrix
\begin{eqnarray}
\frac{1}{n}\sum _{i=1}^n p(i)={\rm Tr}(\rho Q)=\sum _{j=1}^d{\cal E}_j{\rm Tr}(\rho q_j)\ge {\cal E}_{\rm min}\sum _{j=1}^d{\rm Tr}(\rho q_j)={\cal E}_{\rm min}.
\end{eqnarray}
and
\begin{eqnarray}
\frac{1}{n}\sum _{i=1}^n p(i)={\rm Tr}(\rho Q)=\sum _{j=1}^d{\cal E}_j{\rm Tr}(\rho q_j)\le {\cal E}_{\rm max}\sum _{j=1}^d{\rm Tr}(\rho q_j)={\cal E}_{\rm max}.
\end{eqnarray}

\end{proof}

A density matrix has $d^2-1$ degrees of freedom, and therefore the set $\{p(i)\}$ with $i=1,...,n$ and $d\le n\le d^2-1$ contains only part of the information in $\rho$.
We consider such a  partially known density matrix $\rho$, with given set $\{p(i)\}$.
If $\rho=(\rho _{\alpha \beta})$ and  $\Pi (\{i\})= (\Pi _{\alpha \beta} (\{i\}))$ in some basis, then
\begin{eqnarray}\label{rr}
 p(i)=\sum  _{\alpha , \beta} \rho _{\alpha \beta}\Pi _{\beta \alpha} (\{i\});\;\;\;
\rho _{\alpha \beta}=\rho _{\beta \alpha}^*;\;\;\;\sum \rho _{\alpha \alpha}=1,
\end{eqnarray}
is a system of $n$ equations with $d^2-1$ unknowns (the  $\rho _{\alpha \beta}$ written in terms of real variables).
Therefore $d^2-1-n$ real variables in $\rho$ are chosen arbitrarily and the rest $n$ real variables are found by solving the system of Eqs.(\ref{rr}). 
This algorithm leads to a family of Hermitian matrices $\rho$ with trace $1$, but only the positive semi-definite ones are density matrices.
The process of filtering out which of these matrices are positive semi-definite can be computationally expensive.
Below we introduce the Choquet integral which is a  positive semi-definite matrix and can provide an approximation to the density matrices that  have a given set $\{p(i)\}$.

For a given set of projectors $\{\Pi (\{i\})\}$, we use the notation $\rho _1\overset{p}{\sim} \rho _2$ if the density matrices $\rho _1, \rho_2$ have the same set $\{p(i)\}$.
 $\overset{p}{\sim}$ is an equivalence relation and 
we denote with ${\mathfrak M}_d[p(i)]$ the equivalence class of all density matrices with the same set of measured probabilities $\{p(i)\}$.
The set of density matrices ${\cal M}_d$ is partitioned as
\begin{eqnarray}
{\cal M}_d=\bigcup _{\{p(i)\}}{\mathfrak M}_d[p(i)]
\end{eqnarray}

\subsection{The ranking permutation of a density matrix: its `postcode' in the Hilbert space}
\begin{definition}
Let $\rho$ be a density matrix. We relabel the $p(i)={\rm Tr}[\rho \Pi (\{i\})]$  as $p[\sigma(i)]$ ($i=1,...,n$), where $\sigma$ is a permutation, such that
\begin{eqnarray}\label{56}
0\le p[\sigma(1)]\le p[\sigma(2)]\le ....\le p[\sigma(n)] .
\end{eqnarray}
We call $\sigma$ the ranking permutation of the density matrix $\rho$.
\end{definition}
The $( \sigma(n),  \sigma(n-1),..., \sigma(1))$ is a kind of `postcode' of the density matrix $\rho$ in the Hilbert space, with respect to the reference set of vectors $\Sigma$ in Eq.(\ref{sig}).
The quantum states that $\rho$ describes, are located mainly in $H(\{\sigma(n)\})$ (with weight $ p[\sigma(n)]$), less in $H(\{\sigma(n-1)\})$  (with weight $ p[\sigma(n-1)]$), even less in  $H(\{\sigma(n-2)\})$  (with weight $ p[\sigma(n-2)]$), etc.
A quantum state is in general in all  (non-orthogonal)  subspaces $H(\{\sigma(n)\})$, $H(\{\sigma(n-1)\})$, $H(\{\sigma(n-2)\})$, etc, with different weights.

\subsection{The Choquet integral ${\cal C}(\rho)$: a positive semi-definite  operator}

In analogy to Eq.(\ref{87}), we introduce the Choquet integral
\begin{eqnarray}\label{010}
{\cal C}(\rho)=\sum_{i=n-d+1}^{n}p [\sigma(i)]\varpi _\sigma (i),
\end{eqnarray}
where $\sigma$ is the ranking permutation of $\rho$.
${\cal C}(\rho)$ is a positive semi-definite Hermitian operator with eigenvalues the $d$ largest values of $p(i)$, and eigenprojectors the $\varpi _\sigma (i)$.

Only the probabilities $p(i)$ enter in this calculation, and therefore all density matrices in the equivalence class ${\mathfrak M}_d[p(i)]$ have the same Choquet integral ${\cal C}(\rho)$.
In fact, only the `important' values of $p(i)$ (the $d$ largest values) enter in the integral.
Therefore two different equivalence classes with the same $d$ largest values of $p(i)$, have the same Choquet integral ${\cal C}(\rho)$.

An alternative equivalent expression for the Choquet integral is 
\begin{eqnarray}\label{00}
{\cal C}(\rho)&=&p [\sigma(n-d+1)]{\bf 1}+\{p[\sigma (n-d+2)]-p [\sigma (n-d+1)]\}\pi_\sigma(n-d+2)+....\nonumber\\
&+&\{p[\sigma (n)]-p [\sigma (n-1)]\}\pi_\sigma (n)
\end{eqnarray}
Eqs.(\ref{010}), (\ref{00}) are analogous to Eq.(\ref{87}).

If two of the values of the $p(i)$ function are equal to each other ($p [\sigma(i)]=p [\sigma(i+1)]$), there are two different orderings but they both lead to the same Choquet integral.
Indeed, the contribution of these two terms  is
\begin{eqnarray}\label{ppp}
p [\sigma(i)]\varpi _\sigma (i)+p [\sigma(i+1)]\varpi _\sigma (i+1)=p [\sigma(i)]\{\Pi[\{\sigma(i),...,\sigma(n)\}]-\Pi[\{\sigma(i+2),...,\sigma(n)\}]\}.
\end{eqnarray}
This result does not change if we swap the two states labelled with $i$ and $i+1$.

\begin{example}

If the $d$ largest values of  $p(i)={\rm Tr}[\rho \Pi (\{i\})]$ are equal to each other and equal to $p$, then Eq.(\ref{010}) gives
\begin{eqnarray}\label{AA}
{\cal C}(\rho)=p{\bf 1}.
\end{eqnarray}
An example of this is the case
\begin{eqnarray}\label{AA}
\rho=\frac{1}{d}{\bf 1}.
\end{eqnarray}
Then $p(i)=\frac{1}{d}$, and 
\begin{eqnarray}\label{AA}
{\cal C}\left (\frac{1}{d}{\bf 1}\right )=\frac{1}{d}{\bf 1}.
\end{eqnarray}
\end{example}

\subsection{The trace of the Choquet integral ${\cal C}(\rho)$ }

The following proposition is similar to  proposition \ref{PRO38} and it provides bounds for the ${\rm Tr}[{\cal C}(\rho)]$.
\begin{proposition}
Let $\rho$ be a density matrix with ranking permutation $\sigma$. Also let ${\cal E}_{\rm min}(\sigma)$, ${\cal E}_{\rm max}(\sigma)$ be the minimum and maximum eigenvalues of the density matrix
\begin{eqnarray}
Q_{\sigma}=\frac{1}{d}\sum _{i=n-d+1}^n\Pi [\{\sigma(i)\}].
\end{eqnarray}
Then
\begin{eqnarray}\label{BBB}
{\cal E}_{\rm max}(\sigma)\ge \frac{1}{d}{\rm Tr}[{\cal C}(\rho)]\ge {\cal E}_{\rm min}(\sigma).
\end{eqnarray}
\end{proposition}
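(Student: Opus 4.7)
The plan is to reduce the claim to a direct analogue of Proposition \ref{PRO38} by first identifying $\frac{1}{d}{\rm Tr}[{\cal C}(\rho)]$ with ${\rm Tr}(\rho Q_\sigma)$.

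First I would observe that, under the standing hypothesis that any $d$ vectors of $\Sigma$ are linearly independent, the cumulative projectors satisfy ${\rm Tr}[\pi_\sigma(j)] = n-j+1$ for $j\in\{n-d+1,\dots,n\}$, because $\pi_\sigma(j)$ projects onto the span of $n-j+1 \le d$ linearly independent vectors. Consequently ${\rm Tr}[\varpi_\sigma(i)] = 1$ for every $i=n-d+1,\dots,n$, and combined with the relations $\sum_i \varpi_\sigma(i) = {\bf 1}$ and $\varpi_\sigma(i)\varpi_\sigma(j) = \delta_{ij}\varpi_\sigma(i)$ already recorded at the end of Section \ref{ASD}, this means the $d$ operators $\varpi_\sigma(i)$ form a complete set of rank-one orthogonal projectors in $H(d)$.

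Inserting this into the definition (\ref{010}) gives
\begin{equation}
{\rm Tr}[{\cal C}(\rho)] \;=\; \sum_{i=n-d+1}^{n} p[\sigma(i)]\,{\rm Tr}[\varpi_\sigma(i)] \;=\; \sum_{i=n-d+1}^{n} p[\sigma(i)],
\end{equation}
so that $\frac{1}{d}{\rm Tr}[{\cal C}(\rho)] = \frac{1}{d}\sum_{i=n-d+1}^n {\rm Tr}[\rho\,\Pi(\{\sigma(i)\})] = {\rm Tr}(\rho Q_\sigma)$. I would then note that $Q_\sigma$ is a $d\times d$ positive semi-definite operator with trace $\frac{1}{d}\sum_i {\rm Tr}[\Pi(\{\sigma(i)\})] = 1$, hence a density matrix. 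Taking its spectral decomposition $Q_\sigma = \sum_{j=1}^d {\cal E}_j(\sigma) q_j(\sigma)$ with $\sum_j q_j(\sigma) = {\bf 1}$, one mimics the two-line argument of Proposition \ref{PRO38}: ${\rm Tr}(\rho Q_\sigma) = \sum_j {\cal E}_j(\sigma)\,{\rm Tr}[\rho q_j(\sigma)]$ is sandwiched between ${\cal E}_{\rm min}(\sigma)\sum_j {\rm Tr}[\rho q_j(\sigma)] = {\cal E}_{\rm min}(\sigma)$ and ${\cal E}_{\rm max}(\sigma)\sum_j {\rm Tr}[\rho q_j(\sigma)] = {\cal E}_{\rm max}(\sigma)$, which is the claim.

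The only step with any real content is the trace identity for ${\cal C}(\rho)$, and it in turn hinges on the rank-one nature of the $\varpi_\sigma(i)$, a fact that depends essentially on the linear-independence hypothesis imposed on $\Sigma$; without it, the discrete derivatives could have higher rank and the weights ${\rm Tr}[\varpi_\sigma(i)]$ would no longer all equal $1$, breaking the reduction to $Q_\sigma$. Once that identity is in hand, the eigenvalue sandwich is a verbatim copy of Proposition \ref{PRO38}, so I do not anticipate any additional obstacle.
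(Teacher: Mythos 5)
Your proposal is correct and follows essentially the same route as the paper: reduce $\frac{1}{d}{\rm Tr}[{\cal C}(\rho)]$ to ${\rm Tr}(\rho Q_\sigma)$ via ${\rm Tr}[{\cal C}(\rho)]=\sum_{i=n-d+1}^{n}p[\sigma(i)]$, then sandwich ${\rm Tr}(\rho Q_\sigma)$ between the extreme eigenvalues of $Q_\sigma$ using its spectral decomposition and $\sum_j {\rm Tr}[\rho q_j(\sigma)]=1$. The only difference is that you explicitly justify the trace identity via the rank-one property of the $\varpi_\sigma(i)$, a detail the paper states without proof.
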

\begin{proof}
$Q_{\sigma}$ is a $d\times d$ positive semi-definite  matrix with trace $1$, i.e., a density matrix.
Let ${\cal E}_j(\sigma)$, $q_j(\sigma)$ ($j=1,...,d$) be the eigenvalues and eigenprojectors of $Q_{\sigma}$:
\begin{eqnarray}
Q_{\sigma}=\sum _{j=1}^d{\cal E}_j(\sigma)q_j(\sigma);\;\;\;\sum _{j=1}^dq_j(\sigma)={\bf 1}.
\end{eqnarray}
Any $d$ of the states in the set $\Sigma$ in Eq.(\ref{sig}) form a total set, and from this follows that  $\sum _{j=1}^dq_j(\sigma)={\bf 1}$.

Then for any density matrix
\begin{eqnarray}
\frac{1}{d}{\rm Tr}[{\cal C}(\rho)]=\frac{1}{d}\sum_{i=n-d+1}^{n}p [\sigma(i)]=\sum _{j=1}^d{\cal E}_j(\sigma){\rm Tr}[\rho q_j(\sigma)]\ge {\cal E}_{\rm min}(\sigma)\sum _{j=1}^d{\rm Tr}[\rho q_j(\sigma)]={\cal E}_{\rm min}(\sigma).
\end{eqnarray}
\begin{eqnarray}
\frac{1}{d}{\rm Tr}[{\cal C}(\rho)]=\frac{1}{d}\sum_{i=n-d+1}^{n}p [\sigma(i)]=\sum _{j=1}^d{\cal E}_j(\sigma){\rm Tr}[\rho q_j(\sigma)]\le {\cal E}_{\rm max}(\sigma)\sum _{j=1}^d{\rm Tr}[\rho q_j(\sigma)]={\cal E}_{\rm max}(\sigma).
\end{eqnarray}

\end{proof}

The following proposition shows that ${\rm Tr}[{\cal C}(\rho)]$
 is a convex function. 
\begin{proposition}\label{1234}
The ${\rm Tr}[{\cal C}(\rho)]$
 is a convex function.
\end{proposition}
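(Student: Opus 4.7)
The plan is to observe that ${\rm Tr}[{\cal C}(\rho)]$ is simply the sum of the $d$ largest values of $p(1),\ldots,p(n)$, and then to express this sum of top-$d$ values as a maximum of linear functionals of $\rho$. Convexity will then follow from the standard fact that a pointwise maximum of linear functions is convex.

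First I would compute the trace explicitly. Since the $\varpi_\sigma(i)$ for $i=n-d+1,\ldots,n$ are a complete set of orthogonal one-dimensional projectors in $H(d)$ (as established at the end of Section \ref{ASD}), each satisfies ${\rm Tr}[\varpi_\sigma(i)] = 1$, and from Eq.(\ref{010}) we obtain
\begin{equation}
{\rm Tr}[{\cal C}(\rho)] \;=\; \sum_{i=n-d+1}^{n} p[\sigma(i)],
\end{equation}
which by the definition of the ranking permutation is the sum of the $d$ largest values in $\{p(1),\ldots,p(n)\}$.

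Next I would rewrite this sum as a maximum. A standard combinatorial identity gives
\begin{equation}
\sum_{i=n-d+1}^{n} p[\sigma(i)] \;=\; \max_{\substack{A\subseteq \Omega \\ |A|=d}} \sum_{i\in A} p(i),
\end{equation}
since any choice of $d$ indices produces a sum no larger than the sum over the $d$ indices with the largest $p$-values. Using $p(i)={\rm Tr}[\rho\,\Pi(\{i\})]$, each inner sum can be written as ${\rm Tr}[\rho\,\Pi_A]$ with $\Pi_A:=\sum_{i\in A}\Pi(\{i\})$, which is linear in $\rho$. Therefore
\begin{equation}
{\rm Tr}[{\cal C}(\rho)] \;=\; \max_{\substack{A\subseteq \Omega \\ |A|=d}} {\rm Tr}[\rho\,\Pi_A].
\end{equation}

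Finally, for $\lambda\in[0,1]$ and density matrices $\rho_1,\rho_2$, linearity of the trace gives ${\rm Tr}[(\lambda\rho_1+(1-\lambda)\rho_2)\Pi_A] = \lambda{\rm Tr}[\rho_1\Pi_A]+(1-\lambda){\rm Tr}[\rho_2\Pi_A]$ for every fixed $A$, and taking the maximum over $A$ on the right-hand side (which is majorized term-by-term by separate maxima) yields
\begin{equation}
{\rm Tr}\bigl[{\cal C}(\lambda\rho_1+(1-\lambda)\rho_2)\bigr] \;\le\; \lambda\,{\rm Tr}[{\cal C}(\rho_1)] + (1-\lambda)\,{\rm Tr}[{\cal C}(\rho_2)],
\end{equation}
which is the required convexity. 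There is no substantive obstacle here; the only step that is not purely formal is the max-formula for the sum of the top $d$ entries, but that is a textbook rearrangement fact. The argument also makes transparent why convexity holds in spite of the ranking permutation changing with $\rho$: different $\rho$'s simply attain the maximum at different subsets $A$.
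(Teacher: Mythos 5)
Your proof is correct and is essentially the paper's own argument: the paper evaluates the top-$d$ sum at the ranking permutation of the mixture, splits it linearly, and bounds each piece by ${\rm Tr}[{\cal C}(\rho_1)]$ and ${\rm Tr}[{\cal C}(\rho_2)]$ on the grounds that the mixture's top-$d$ subset need not be optimal for $\rho_1$ or $\rho_2$ individually --- which is exactly your ``pointwise maximum of linear functionals'' formulation made explicit. Your packaging of ${\rm Tr}[{\cal C}(\rho)]$ as $\max_{|A|=d}{\rm Tr}[\rho\,\Pi_A]$ (with $\Pi_A$ the \emph{sum} $\sum_{i\in A}\Pi(\{i\})$, not the paper's span projector $\Pi(A)$) is cleaner and makes the mechanism more transparent, but the underlying inequality is identical.
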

\begin{proof}
We need to prove that if $\rho _1, \rho_2$ are density matrices and $0\le \lambda\le 1$, then
\begin{eqnarray}\label{33b}
{\rm Tr}\{{\cal C}[\lambda \rho _1+(1-\lambda)\rho _2]\}\le \lambda {\rm Tr}[{\cal C}(\rho _1)]+(1-\lambda){\rm Tr}[{\cal C}(\rho_2)].
\end{eqnarray}
In this proof we use the more detailed notation $p [\sigma(i); \rho]$ which indicates explicity the density matrix used in the calculation of $p [\sigma(i)]$.

We start from the relation
\begin{eqnarray}
p [\sigma(i);\lambda \rho _1+(1-\lambda)\rho _2]=\lambda {\rm Tr}\{\rho _1[\Pi[\sigma (i)]\}+(1-\lambda){\rm Tr}\{\rho _2[\Pi[\sigma (i)]\},
\end{eqnarray}
where $\sigma$ is the ranking permutation of $\lambda \rho _1+(1-\lambda)\rho _2$.
From this we get
\begin{eqnarray}
\sum_{i=n-d+1}^{n}p [\sigma(i);\lambda \rho _1+(1-\lambda)\rho _2]= \lambda \sum_{i=n-d+1}^{n} p[\sigma (i);\rho _1]+(1-\lambda) \sum_{i=n-d+1}^{n} p[\sigma (i)];\rho _2].
\end{eqnarray}
Then
\begin{eqnarray}
\sum_{i=n-d+1}^{n}p [\sigma(i);\lambda \rho _1+(1-\lambda)\rho _2]={\rm Tr}\{{\cal C}[\lambda \rho _1+(1-\lambda)\rho _2]\}.
\end{eqnarray}
Also 
\begin{eqnarray}
&&\sum_{i=n-d+1}^{n} p[\sigma (i)];\rho _1\}\le{\rm Tr}[{\cal C}(\rho _1)]\nonumber\\
&&\sum_{i=n-d+1}^{n} p[\sigma (i)];\rho _2\}\le {\rm Tr}[{\cal C}(\rho_2)]
\end{eqnarray}
because the ranking permutations for  the density matrices $\rho_1, \rho _2$ are in general different from $\sigma$.
Therefore these two sums do not contain the $d$ largest probabilities corresponding to $\rho _1, \rho _2$. This proves the proposition.
\end{proof}

\subsection{ The Choquet integral in terms of M\"obius operators}\label{MMM}

The M\"obius transform of Eqs(\ref{M}), in the present context is
\begin{eqnarray}\label{m}
{\mathfrak D} (B)=\sum _{A\subseteq B} (-1)^{|A|-|B|}\Pi(A).
\end{eqnarray}
We refer to ${\mathfrak D} (B)$ as the M\"obius operators.
An example is
\begin{eqnarray}
{\mathfrak D} (\{i,j\})=\Pi(\{i,j\})-\Pi(\{i\})-\Pi(\{j\}).
\end{eqnarray}
We can prove that the ${\mathfrak D} (\{i,j\})$ is related to the commutator $[\Pi(\{i\}),\Pi(\{j\}]$ \cite{VO2,VO3}:
\begin{eqnarray}
[\Pi(\{i\}),\Pi(\{j\}]={\mathfrak D} (\{i,j\})[\Pi(\{i\})-\Pi(\{j\})].
\end{eqnarray}
Therefore the M\"obius operators are related to non-commutativity.
If $[\Pi(\{i\}),\Pi(\{j\}]=0$ then ${\mathfrak D} (\{i,j\})=0$.

In the special case that the states in Eq.(\ref{sig}) form an orthonormal basis, then ${\mathfrak D} (B)=0$ for $|B|\ge 2$, and Eqs.(\ref{2A}), (\ref{377}) become equalities.

The inverse M\"obius transform is
\begin{eqnarray}\label{al}
\Pi (A)=\sum _{B\subseteq A}{\mathfrak D} (B).
\end{eqnarray}
In Eq.(\ref{al}) we put $A=\Omega$ (in which case $\Pi(\Omega)={\bf 1}$), and we get
\begin{eqnarray}\label{al1}
\sum _{i=1}^{n}\Pi(\{i\})+
\sum _{i,j}{\mathfrak D} (\{i,j\})+...+{\mathfrak D}(\{1,...,n\})={\bf 1}.
\end{eqnarray}
The projectors $\Pi (\{i\})$ and $\Pi (\{j\})$ (with $i\ne j$) overlap with each other, in the sense that  $\Pi (\{i\})\Pi (\{j\})\ne 0$. 
The inverse M\"obius transform in Eq.(\ref{al1}) involves the $n$ projectors $\Pi(i)$, 
and all the M\"obius ${\mathfrak D}$-operators, whose role is to remove the overlaps between the $\Pi(i)$ so there is no double-counting.

The following proposition is analogous to proposition \ref{PRO12} and is given without proof \cite{C1}.
It expresses the Choquet integral in terms of  M\"obius operators.

\begin{proposition}\label{www}
${\cal C}(\rho)$ can be written as
\begin{eqnarray}\label{n9b}
&&{\cal C}(\rho)={\cal C}_{1}(\rho)+{\cal C}_{2}(\rho)+...+{\cal C}_{n}(\rho)\nonumber\\
&&{\cal C}_{1}(\rho)=\sum \Pi\{(i\}) p(i)\nonumber\\
&&{\cal C}_{2}(\rho)=\sum {\mathfrak D}(\{i,j\})\min \{p(i),p(j)\}\nonumber\\
&&{\cal C}_{3}(\rho)=\sum {\mathfrak D}(\{i,j,k\})\min \{p(i),p(j), p(k)\}\nonumber\\
&&...............................................................\nonumber\\
&&{\cal C}_{n}(\rho)={\mathfrak D}(\{1,...,n\})p[\sigma(1)]
\end{eqnarray}
\end{proposition}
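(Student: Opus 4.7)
My plan is to follow the same strategy as in the classical case (proposition \ref{PRO12}), since proposition \ref{www} is its operator-valued analogue and all the relevant algebraic structure carries through. The starting point is the telescoping form of the Choquet integral given in Eq.(\ref{00}). Extending that form to a full sum over $i=1,\dots,n$ with the convention $p[\sigma(0)]=0$, and using the fact that $\pi_\sigma(i)=\mathbf{1}$ for $i\le n-d+1$, one writes
\begin{equation*}
\mathcal{C}(\rho)=\sum_{i=1}^{n}\bigl\{p[\sigma(i)]-p[\sigma(i-1)]\bigr\}\,\pi_\sigma(i),
\qquad \pi_\sigma(i)=\Pi\bigl(\{\sigma(i),\dots,\sigma(n)\}\bigr).
\end{equation*}

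Next I would apply the inverse M\"obius transform Eq.(\ref{al}) to each cumulative projector, obtaining
\begin{equation*}
\pi_\sigma(i)=\sum_{B\subseteq\{\sigma(i),\dots,\sigma(n)\}}\!\!\!{\mathfrak D}(B),
\end{equation*}
and then interchange the order of summation so that the outer sum runs over subsets $B\subseteq\Omega$. For a given $B$, the inner sum collects those $i$ for which $B\subseteq\{\sigma(i),\dots,\sigma(n)\}$, i.e.\ $i\le i_B^\ast$ where $i_B^\ast:=\min\{k:\sigma(k)\in B\}$ is the smallest rank attained inside $B$.

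The key observation is that the resulting coefficient telescopes:
\begin{equation*}
\sum_{i=1}^{i_B^\ast}\bigl\{p[\sigma(i)]-p[\sigma(i-1)]\bigr\}=p[\sigma(i_B^\ast)].
\end{equation*}
Because $\sigma$ is the ranking permutation of Eq.(\ref{56}), the smallest rank in $B$ corresponds to the element of $B$ with the smallest probability, hence $p[\sigma(i_B^\ast)]=\min\{p(j):j\in B\}$. Substituting back yields $\mathcal{C}(\rho)=\sum_{B\subseteq\Omega}{\mathfrak D}(B)\min\{p(j):j\in B\}$, which after regrouping by $|B|$ gives the stated decomposition ${\cal C}_{1}+{\cal C}_{2}+\dots+{\cal C}_{n}$. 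The innermost term ${\cal C}_n={\mathfrak D}(\Omega)\,p[\sigma(1)]$ is immediate since $\min_{j\in\Omega}p(j)=p[\sigma(1)]$.

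The only non-routine point is the combinatorial bookkeeping in the swap of summations and verifying that the telescoping is consistent with the cutoff coming from the fact that $\varpi_\sigma(i)=0$ for $i\le n-d$; this is handled automatically once one notes that the extra differences $\{p[\sigma(i)]-p[\sigma(i-1)]\}\pi_\sigma(i)$ added for $i\le n-d+1$ all involve $\pi_\sigma(i)=\mathbf{1}$ and collapse to the single leading term already present in Eq.(\ref{00}). Apart from this, no positivity or ordering issues beyond those already used in defining $\mathcal{C}(\rho)$ enter the argument.
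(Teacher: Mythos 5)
Your proof is correct; note, however, that the paper deliberately states Proposition \ref{www} ``without proof,'' deferring to \cite{C1} and to the classical analogue (Proposition \ref{PRO12}), so there is no in-paper argument to compare against --- what you have written supplies the missing derivation. Your three steps are all sound: extending Eq.(\ref{00}) to the full telescoping sum $\sum_{i=1}^{n}\{p[\sigma(i)]-p[\sigma(i-1)]\}\pi_\sigma(i)$ is legitimate because $\pi_\sigma(i)={\bf 1}$ for $i\le n-d+1$, so the added terms collapse to the leading term $p[\sigma(n-d+1)]{\bf 1}$ already present; Möbius-inverting each cumulative projector via Eq.(\ref{al}) and swapping the order of summation makes the coefficient of ${\mathfrak D}(B)$ telescope to $p[\sigma(i_B^\ast)]$; and the identification $p[\sigma(i_B^\ast)]=\min\{p(j):j\in B\}$ is exactly where the ranking permutation of Eq.(\ref{56}) enters. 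Two minor points you could make explicit for completeness: the term $B=\emptyset$ in the inverse Möbius transform contributes nothing because ${\mathfrak D}(\emptyset)=\Pi(\emptyset)=0$; and when some of the $p(i)$ coincide the ranking permutation is not unique, but $p[\sigma(i_B^\ast)]$ still equals $\min_{j\in B}p(j)$ for any admissible choice of $\sigma$, so the resulting formula is well defined, consistent with the paper's remark around Eq.(\ref{ppp}).
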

The projectors $\Pi\{i\})$ are not orthogonal. 
The term ${\mathfrak D}(i,j)\min \{p(i), p(j)\}$ is a `correction' related to the non-orthogonality of the states $\ket{i}, \ket{j}$.
The term ${\mathfrak D}(i,j,k)\min \{p(i), p(j), p(k) \}$ 
is a `correction' related to the non-orthogonality of the states $\ket{i}, \ket{j}, \ket{k}$, etc.

\section{Comonotonic density matrices: physically similar density matrices}\label{L}

\begin{definition}
The density matrices $\rho_1, \rho _2$ are comonotonic,  if they have the same ranking permutation.
We denote this as $\rho _1\overset{c}{\sim} \rho _2$.
\end{definition}
Comonotonic density matrices  `live' in the same postcode $(\sigma(n),...,\sigma(1))$ within the Hilbert space, and in this sense they are physically similar.

It is easily seen that: 
\begin{itemize}

\item
$\frac{1}{d}{\bf 1}\overset{c}{\sim} \rho $ for any density matrix $\rho$. 
\item
For $\lambda \ge 0$, the $\rho$ and $\lambda \rho$ are comonotonic (the latter is a non-normalized density matrix).
\item
If $\rho _1\overset{c}{\sim} \rho _2$ and $0\le \lambda \le 1$, then $\rho _1\overset{c}{\sim}  \rho _1 +(1-\lambda)\rho_2$ and  $\rho _2\overset{c}{\sim}  \rho _1 +(1-\lambda)\rho_2$.
\item
If  $\rho _1\overset{c}{\sim}  \rho_2$, then  the projectors $\varpi _\sigma (i)$ are the same for both $\rho _1, \rho _2$.  Consequently 
\begin{eqnarray}
{\cal C}(\rho_1)=\sum_{i=n-d+1}^{n}p _1[\sigma(i)]\varpi _\sigma (i);\;\;\;
{\cal C}(\rho_2)=\sum_{i=n-d+1}^{n}p _2[\sigma(i)]\varpi _\sigma (i)
\end{eqnarray}
and the ${\cal C}(\rho_1), {\cal C}(\rho_2)$ commute.
\end{itemize}

\begin{proposition}
If $\rho_1, \rho_2$ are comonotonic density matrices and $0\le \lambda\le 1$, then 
\begin{eqnarray}\label{VV}
{\cal C}[\lambda \rho _1 +(1-\lambda)\rho_2]=\lambda{\cal C}(\rho_1)+(1-\lambda){\cal C}(\rho _2).
\end{eqnarray}
\end{proposition}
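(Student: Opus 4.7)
The plan is to reduce the claim to the formula ${\cal C}(\rho)=\sum_{i=n-d+1}^{n}p[\sigma(i)]\varpi_\sigma(i)$ and exploit the fact that, under comonotonicity, the ranking permutation $\sigma$ and consequently the projector weights $\varpi_\sigma(i)$ are the same for $\rho_1$, $\rho_2$, and any convex combination. The only ingredient beyond the definitions is the linearity of $p(i)={\rm Tr}[\rho\,\Pi(\{i\})]$ in $\rho$.

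First, I would write $\sigma$ for the common ranking permutation of $\rho_1$ and $\rho_2$. By linearity of the trace,
\begin{equation*}
p[\sigma(i);\lambda\rho_1+(1-\lambda)\rho_2]=\lambda\,p[\sigma(i);\rho_1]+(1-\lambda)\,p[\sigma(i);\rho_2].
\end{equation*}
Since $0\le\lambda\le 1$ and both sequences $p[\sigma(i);\rho_1]$ and $p[\sigma(i);\rho_2]$ are nondecreasing in $i$, the same is true of the convex combination. Hence $\sigma$ is also a ranking permutation of $\lambda\rho_1+(1-\lambda)\rho_2$. (This is exactly the third bullet listed above the proposition, and in any case follows immediately from monotonicity being preserved under convex combinations.)

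Because $\lambda\rho_1+(1-\lambda)\rho_2$ shares its ranking permutation with $\rho_1$ and $\rho_2$, the cumulative projectors $\pi_\sigma(i)$ and their discrete derivatives $\varpi_\sigma(i)$ appearing in the Choquet integral (\ref{010}) are the same for all three density matrices. Substituting the linear decomposition of $p[\sigma(i);\cdot]$ above into the definition gives
\begin{equation*}
{\cal C}[\lambda\rho_1+(1-\lambda)\rho_2]=\sum_{i=n-d+1}^{n}\bigl\{\lambda\,p[\sigma(i);\rho_1]+(1-\lambda)\,p[\sigma(i);\rho_2]\bigr\}\varpi_\sigma(i),
\end{equation*}
which splits as $\lambda{\cal C}(\rho_1)+(1-\lambda){\cal C}(\rho_2)$.

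The only non-trivial step is verifying that $\sigma$ is still a ranking permutation for the convex combination; everything else is bookkeeping. This is the point where the argument would fail without comonotonicity, since in general convex combination can reshuffle the ordering of $p(i)$, introduce different $\varpi_\sigma(i)$, and destroy linearity — exactly the source of the strict inequality already exploited in the convexity argument of Proposition~\ref{1234}. Under comonotonicity this obstruction disappears, and equality holds.
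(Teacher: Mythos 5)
Your proposal is correct and follows essentially the same route as the paper: both rest on the observation that $\rho_1$, $\rho_2$, and $\lambda\rho_1+(1-\lambda)\rho_2$ share the ranking permutation $\sigma$ (hence the same $\varpi_\sigma(i)$), after which linearity of $p(i)={\rm Tr}[\rho\,\Pi(\{i\})]$ in $\rho$ splits the sum in Eq.~(\ref{010}). The paper states this more tersely ("we easily prove Eq.~(\ref{VV})"), whereas you spell out the bookkeeping, including the check that convex combinations preserve the nondecreasing order of the $p[\sigma(i)]$.
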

\begin{proof}
Comonotonic density matrices have the same ranking permutation, and consequently the 
projectors $\varpi _\sigma (i)$ are the same.
The  $\rho_1, \rho_2, \lambda \rho _1 +(1-\lambda)\rho_2$ are pairwise comonotonic, and then we easily prove Eq.(\ref{VV}).
\end{proof}

\subsection{The equivalence classes  ${\cal M}_d^\prime (\sigma)$ }
Comonotonicity is not transitive in the set ${\cal M}_d$. 
For example, $\frac{1}{d}{\bf 1}$ is comonotonic to every density matrix and yet there are density matrices which are not comonotonic.
As in section \ref{nnmm} we define a subset of ${\cal M}_d$ where comonotonicity is transitive.
\begin{definition}
${\cal M}_d^\prime$ is a subset of ${\cal M}_d$ which contains density matrices for which the $p(i)={\rm Tr}[\rho \Pi (\{i\})]$ are different from each other
(if $i\ne j$ then $p(i)\ne p(j)$).
\end{definition}
Comonotonicity is transitive in ${\cal M}_d^\prime$. In this case  we have a strict inequality in Eq.(\ref{56}).
This is analogous to commutativity which is not transitive in general, but it is transitive if we restrict ourselves to matrices with eigenvalues which are different from each other.

Since transitivity holds, comonotonicity is an equivalence relation in ${\cal M}_d^\prime$, which partitions it into equivalence classes
${\cal M}_d^\prime (\sigma)$ (where $\sigma$ is the ranking permutation).
All density matrices within a given equivalence class have the same ranking premutation and the same set of discrete derivatives of the cumulative projectors $\{\varpi _\sigma (i)\;|\;i=n-d+1,...,d\}$.
The set ${\cal M}_d \setminus{\cal M}_d^\prime $ contains the `boundary density matrices' (with equalities in Eq.(\ref{56})).

The number of equivalence classes ${\cal M}_d^\prime (\sigma)$ is
\begin{eqnarray}
{\mathfrak E}(n,d)=\frac{n !}{(n-d)!}.
\end{eqnarray}
This is because only the $d$ largest values of $p [\sigma(i)]$ enter in Eq.(\ref{010}).
In contrast, all $n$ values of $f(i)$ enter in Eq.(\ref{87}) and the number of equivalence classes in ${\cal L}_n^\prime$ is $n!$.

Clearly  $\rho _1\overset{p}{\sim}  \rho_2$ implies that  $\rho _1\overset{c}{\sim}  \rho_2$,
and each of the equivalence classes ${\cal M}_d^\prime (\sigma)$ is the union of an infinite number of the equivalence classes ${\mathfrak M}_d[p(i)]$.

\subsection{A total preorder within ${\cal M}_d^\prime (\sigma)$}

Various quantities can be used to define  total preorders in the set of density matrices.
An example is the entropy $E(\rho)=-{\rm Tr}(\rho \log \rho)$ which defines the total preorder `more mixed'.

Here we define a total preorder  based on the ${\rm Tr}[{\cal C}(\rho )]$.
A preorder is useful only if it has strong properties.
The preorder defined below has strong properties (Eqs. (\ref{K3}), (\ref{K1})) for comonotonic density matrices.
For this reason, we define the  total preorder $\overset{\sigma} \succ$, for density matrices which belong in the same equivalence class ${\cal M}_d^\prime (\sigma)$.
\begin{definition}
Let $\rho _1, \rho _2$ be density matrices in the same equivalence class ${\cal M}_d^\prime (\sigma)$.
Then $\rho _1 \overset{\sigma} \succ \rho _2$ if 
\begin{eqnarray}
{\cal E}_{\rm max}(\sigma)\ge {\rm Tr}[{\cal C}(\rho _1)]\ge {\rm Tr}[{\cal C}(\rho _2)]\ge {\cal E}_{\rm min}(\sigma).
\end{eqnarray}
\end{definition}

\begin{proposition}
$ \overset{\sigma}\succ$ is a total preorder in the set ${\cal M}_d^\prime (\sigma)$.
\end{proposition}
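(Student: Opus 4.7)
The plan is to verify the three defining axioms of a total preorder, namely reflexivity, transitivity, and totality (connexity). The key conceptual observation is that $\overset{\sigma}\succ$ is essentially the pullback, along the real-valued functional $\rho \mapsto {\rm Tr}[{\cal C}(\rho)]$, of the standard total order on the real line restricted to the interval bounded by ${\cal E}_{\rm min}(\sigma)$ and ${\cal E}_{\rm max}(\sigma)$. Hence the argument reduces to invoking the previous bounds proposition together with elementary properties of $\ge$ on $\mathbb{R}$.

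First I would establish reflexivity. Given $\rho \in {\cal M}_d^\prime(\sigma)$, the chain ${\rm Tr}[{\cal C}(\rho)] \ge {\rm Tr}[{\cal C}(\rho)]$ is trivial, while the two extremal inequalities ${\cal E}_{\rm max}(\sigma) \ge {\rm Tr}[{\cal C}(\rho)] \ge {\cal E}_{\rm min}(\sigma)$ come directly from the preceding proposition bounding ${\rm Tr}[{\cal C}(\rho)]$ by the extremal eigenvalues of $Q_\sigma$. This step requires no new work beyond citation.

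Next I would prove transitivity. Suppose $\rho_1 \overset{\sigma}\succ \rho_2$ and $\rho_2 \overset{\sigma}\succ \rho_3$ inside ${\cal M}_d^\prime(\sigma)$. Unfolding the definition gives two overlapping chains of inequalities sharing the endpoints ${\cal E}_{\rm max}(\sigma)$ and ${\cal E}_{\rm min}(\sigma)$, and transitivity of $\ge$ on $\mathbb{R}$ immediately yields ${\rm Tr}[{\cal C}(\rho_1)] \ge {\rm Tr}[{\cal C}(\rho_3)]$. Sandwiching this between the outer eigenvalue bounds, inherited from the outer terms of the two given chains, gives $\rho_1 \overset{\sigma}\succ \rho_3$.

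Finally I would prove totality. For any pair $\rho_1, \rho_2 \in {\cal M}_d^\prime(\sigma)$, each of the real numbers ${\rm Tr}[{\cal C}(\rho_j)]$ lies in the common interval $[{\cal E}_{\rm min}(\sigma), {\cal E}_{\rm max}(\sigma)]$ by the bounds proposition; since $\mathbb{R}$ is totally ordered, either ${\rm Tr}[{\cal C}(\rho_1)] \ge {\rm Tr}[{\cal C}(\rho_2)]$ or the reverse, and in either case the definition is satisfied with $\rho_1, \rho_2$ in the appropriate roles. I anticipate no genuine obstacle: the single delicate point is simply to confirm that reflexivity is not violated and that the outer bounds travel along every chain, both of which are secured by the previous bounds proposition. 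I would conclude with the remark that the relation is only a preorder, not a partial order, because antisymmetry fails whenever two distinct density matrices in the equivalence class happen to share the same value of ${\rm Tr}[{\cal C}(\rho)]$.
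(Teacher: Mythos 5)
Your proposal is correct and follows essentially the same route as the paper: both arguments reduce the claim to the fact that $\overset{\sigma}\succ$ is the pullback of the total order on $\mathbb{R}$ under $\rho\mapsto{\rm Tr}[{\cal C}(\rho)]$, with reflexivity, transitivity, and totality inherited from $\ge$ on the reals, and antisymmetry failing because equal traces do not force equal density matrices. Your version is merely more explicit than the paper's in checking that the outer eigenvalue bounds are automatically satisfied via the earlier bounds proposition.
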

\begin{proof}
Clearly $\overset {\sigma} \succ$ is reflexive ($\rho\overset{\sigma}\succ \rho$)  and transitive (if $\rho _1\overset{\sigma}\succ \rho _2$ and  $\rho _2\overset{\sigma}\succ \rho _3$ then  $\rho _1\overset{\sigma}\succ \rho _3$).
But the antisymmetry property does not hold ($\rho _1 \overset{\sigma}\succ \rho _2$ and $\rho _1 \overset{\sigma}\prec \rho _2$ 
implies that ${\rm Tr}[{\cal C}(\rho _1)]={\rm Tr}[{\cal C}(\rho _2)]$, but it does not follow that $\rho _1 = \rho _2$ ).
Therefore $\overset{\sigma}\succ$ is a preorder, rather than a partial order.
It is a total preorder because for any $\rho _1, \rho _2$, either $\rho _1 \overset{\sigma}\prec  \rho _2$ or $\rho _1\overset{\sigma}\succ \rho _2$.
\end{proof}

The following proposition shows that addition preserves the $\overset{\sigma}\succ$ preorder:
\begin{proposition}\label{cc1}
Let $\rho _1, \rho _2, \rho _3$ be density matrices in the same equivalence class ${\cal M}_d^\prime (\sigma)$,  and $0\le \lambda \le 1$. Then
\begin{eqnarray}\label{K3}
\rho _1\overset{\sigma}\succ \rho _2\;\;\rightarrow\;\; \lambda \rho _1+(1-\lambda) \rho _3\overset{\sigma}\succ  \lambda \rho _2 +(1-\lambda)\rho _3.
\end{eqnarray}
\end{proposition}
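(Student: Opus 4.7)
My plan is to reduce the proposition to the linearity property of the Choquet integral on comonotonic density matrices (Eq.~(\ref{VV})), and then use the monotonicity of the real-valued map $\rho \mapsto {\rm Tr}[{\cal C}(\rho)]$ along the one-parameter family of convex combinations.

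First I would observe that since $\rho_1, \rho_2, \rho_3 \in {\cal M}_d^\prime(\sigma)$, all three share the ranking permutation $\sigma$ and are therefore pairwise comonotonic. By the third bullet preceding Eq.~(\ref{VV}), the convex combinations $\rho_1^\lambda := \lambda \rho_1 + (1-\lambda)\rho_3$ and $\rho_2^\lambda := \lambda \rho_2 + (1-\lambda)\rho_3$ are comonotonic with each of $\rho_1, \rho_2, \rho_3$, hence belong to the same equivalence class ${\cal M}_d^\prime(\sigma)$ (up to potential degeneracies on the boundary, which I address below). Applying Eq.~(\ref{VV}) twice gives
\begin{eqnarray}
{\cal C}(\rho_1^\lambda) = \lambda {\cal C}(\rho_1) + (1-\lambda) {\cal C}(\rho_3), \qquad
{\cal C}(\rho_2^\lambda) = \lambda {\cal C}(\rho_2) + (1-\lambda) {\cal C}(\rho_3).
\end{eqnarray}

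Taking traces and subtracting, the $\rho_3$-terms cancel and I obtain
\begin{eqnarray}
{\rm Tr}[{\cal C}(\rho_1^\lambda)] - {\rm Tr}[{\cal C}(\rho_2^\lambda)] = \lambda \bigl\{{\rm Tr}[{\cal C}(\rho_1)] - {\rm Tr}[{\cal C}(\rho_2)]\bigr\} \ge 0,
\end{eqnarray}
where the inequality uses the hypothesis $\rho_1 \overset{\sigma}\succ \rho_2$ together with $\lambda \ge 0$. To complete the verification that $\rho_1^\lambda \overset{\sigma}\succ \rho_2^\lambda$, I still need to place both traces between ${\cal E}_{\min}(\sigma)$ and ${\cal E}_{\max}(\sigma)$. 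This is automatic: $\rho_1^\lambda$ and $\rho_2^\lambda$ are themselves density matrices with ranking permutation $\sigma$, so the bounds of the second proposition in Section~\ref{BBB} apply directly.

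The main technical nuisance, rather than a real obstacle, is the strict-inequality condition defining ${\cal M}_d^\prime(\sigma)$: a convex combination of two elements of ${\cal M}_d^\prime(\sigma)$ could in principle produce equalities among the $p(i)$, landing on the boundary set ${\cal M}_d \setminus {\cal M}_d^\prime$. However, as already noted in the discussion following Eq.~(\ref{ppp}), the Choquet integral and hence ${\rm Tr}[{\cal C}(\cdot)]$ is invariant under swapping equal $p[\sigma(i)]$ values, so both the linearity identity (\ref{VV}) and the bound inequalities extend continuously to such boundary cases. The conclusion $\rho_1^\lambda \overset{\sigma}\succ \rho_2^\lambda$ therefore follows in the full generality stated.
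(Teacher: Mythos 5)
Your proposal is correct and follows essentially the same route as the paper's own proof: both arguments rest on the comonotonic additivity of the Choquet integral, Eq.~(\ref{VV}), combined with the linearity of the trace, to convert the hypothesis ${\rm Tr}[{\cal C}(\rho_1)]\ge{\rm Tr}[{\cal C}(\rho_2)]$ into the desired inequality for the convex combinations. Your additional remarks on the eigenvalue bounds and on possible degeneracies at the boundary of ${\cal M}_d^\prime(\sigma)$ are careful touches the paper leaves implicit, but they do not change the substance of the argument.
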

\begin{proof}
We have
\begin{eqnarray}
\rho _1\overset{\sigma}\succ \rho _2\;\;\rightarrow\;\;{\rm Tr}[{\cal C}(\rho _1)]\ge {\rm Tr}[{\cal C}(\rho _2)]
\rightarrow\;\;{\rm Tr}[\lambda {\cal C}(\rho _1)+(1-\lambda){\cal C}(\rho _3)]\ge {\rm Tr}[\lambda{\cal C}(\rho _2)+(1-\lambda){\cal C}(\rho _3)]
\end{eqnarray}
Using the additivity of the Choquet integral for comonotonic operators, we rewrite this as
\begin{eqnarray}
{\rm Tr}[{\cal C}(\lambda \rho _1+(1-\lambda)\rho _3)]\ge {\rm Tr}[{\cal C}(\lambda\rho _2+(1-\lambda)\rho _3)],
\end{eqnarray}
and this proves the proposition.

\end{proof}

\begin{proposition}
Let $\rho _1 \overset{\sigma}\succ \rho _2$ and $0\le \lambda \le 1$. Then
\begin{eqnarray}\label{K1}
\rho _1 \overset{\sigma}\succ \lambda \rho _1+(1-\lambda) \rho _2\overset{\sigma}\succ \rho _2.
\end{eqnarray}
\end{proposition}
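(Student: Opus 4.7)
The plan is to reduce the claim to a simple convexity observation about the scalar quantity $\mathrm{Tr}[\mathcal{C}(\cdot)]$, exploiting the additivity of the Choquet integral on a comonotonic cone, which has just been established in Eq.~(\ref{VV}).

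First I would verify that the convex combination $\rho_\lambda := \lambda\rho_1 + (1-\lambda)\rho_2$ lies in the same equivalence class $\mathcal{M}_d^\prime(\sigma)$ as $\rho_1$ and $\rho_2$. The third bullet in Section~\ref{L} already tells us that $\rho_\lambda$ is comonotonic with both, so its ranking permutation is $\sigma$; what one still needs is that the induced probabilities $p[\sigma(i);\rho_\lambda]$ remain pairwise distinct. But $p[\sigma(i);\rho_\lambda] = \lambda p[\sigma(i);\rho_1] + (1-\lambda)p[\sigma(i);\rho_2]$, and since both summands are already strictly increasing in $i$ (by membership in $\mathcal{M}_d^\prime$), the convex combination is strictly increasing too whenever $0<\lambda<1$; the endpoint cases $\lambda\in\{0,1\}$ are trivial.

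Next I apply Eq.~(\ref{VV}) to the comonotonic pair $(\rho_1,\rho_2)$, obtaining
\begin{equation*}
\mathcal{C}(\rho_\lambda) = \lambda\,\mathcal{C}(\rho_1) + (1-\lambda)\,\mathcal{C}(\rho_2),
\end{equation*}
and take the trace:
\begin{equation*}
\mathrm{Tr}[\mathcal{C}(\rho_\lambda)] = \lambda\,\mathrm{Tr}[\mathcal{C}(\rho_1)] + (1-\lambda)\,\mathrm{Tr}[\mathcal{C}(\rho_2)].
\end{equation*}
The hypothesis $\rho_1\overset{\sigma}{\succ}\rho_2$ gives $\mathrm{Tr}[\mathcal{C}(\rho_1)]\ge\mathrm{Tr}[\mathcal{C}(\rho_2)]$, so the convex combination of these two numbers is sandwiched between them:
\begin{equation*}
\mathrm{Tr}[\mathcal{C}(\rho_1)] \;\ge\; \mathrm{Tr}[\mathcal{C}(\rho_\lambda)] \;\ge\; \mathrm{Tr}[\mathcal{C}(\rho_2)].
\end{equation*}
The outer bounds by $\mathcal{E}_{\mathrm{max}}(\sigma)$ and $\mathcal{E}_{\mathrm{min}}(\sigma)$ needed in the definition of $\overset{\sigma}{\succ}$ hold automatically because $\rho_\lambda$ is a density matrix in $\mathcal{M}_d^\prime(\sigma)$, so the preceding proposition applies directly. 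This yields Eq.~(\ref{K1}).

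I do not foresee a real obstacle: the work has already been done in proving comonotonic additivity of $\mathcal{C}$ and the $\mathcal{E}_{\min}/\mathcal{E}_{\max}$ bounds. The only mildly subtle point is the bookkeeping check that $\rho_\lambda$ stays in the strict class $\mathcal{M}_d^\prime(\sigma)$ (so that the preorder is even defined on it), which is why I would address that first; everything else is a one-line consequence of linearity of the trace and of $\mathcal{C}$ on the comonotonic cone.
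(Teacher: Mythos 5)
Your proof is correct, and it takes a cleaner route than the paper's. The paper proves the two inequalities separately: the left-hand one from the convexity bound ${\rm Tr}\{{\cal C}[\lambda\rho_1+(1-\lambda)\rho_2]\}\le\lambda{\rm Tr}[{\cal C}(\rho_1)]+(1-\lambda){\rm Tr}[{\cal C}(\rho_2)]$ of Eq.~(\ref{33b}), and the right-hand one by invoking Proposition~\ref{cc1} with $\rho_2=\rho_3$. You instead observe that the hypothesis $\rho_1\overset{\sigma}\succ\rho_2$ already places both matrices in the same class ${\cal M}_d^\prime(\sigma)$, so they are comonotonic and Eq.~(\ref{VV}) upgrades the convexity inequality to an exact equality ${\rm Tr}[{\cal C}(\rho_\lambda)]=\lambda{\rm Tr}[{\cal C}(\rho_1)]+(1-\lambda){\rm Tr}[{\cal C}(\rho_2)]$, from which both inequalities of Eq.~(\ref{K1}) follow in one line. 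This is a genuinely stronger intermediate fact (the trace interpolates linearly along the segment, rather than merely being sandwiched), and it avoids any appeal to the general convexity proposition. You also explicitly verify that $\rho_\lambda$ remains in the strict class ${\cal M}_d^\prime(\sigma)$ --- its probabilities are convex combinations of two sequences that are both strictly increasing under $\sigma$ --- so that $\overset{\sigma}\succ$ is even defined on it; the paper leaves this well-definedness point implicit. Both arguments are valid; yours is the more economical given that comonotonic additivity has already been established.
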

\begin{proof}
Using Eq.(\ref{33b}) and the fact that ${\rm Tr}[{\cal C}(\rho _1)]\ge {\rm Tr}[{\cal C}(\rho _2)]$, we get
\begin{eqnarray}
{\rm Tr}\{{\cal C}[\lambda \rho _1+(1-\lambda)\rho _2]\}\le \lambda {\rm Tr}[{\cal C}(\rho _1)]+(1-\lambda){\rm Tr}[{\cal C}(\rho_2)]\le {\rm Tr}[{\cal C}(\rho _1)].
\end{eqnarray}
This proves the left hand side.

The right hand side follows from proposition \ref{cc1} with $\rho_2=\rho_3$.
\end{proof}

If $\rho _1 \overset{\sigma}\succ \rho _2$ then $\rho _1, \rho _2$ have the same ranking permutation $\sigma$, and
then ${\rm Tr}{\cal C}(\rho_1)\ge {\rm Tr}{\cal C}(\rho_2)$ means that $\rho _1$ is more aligned to the projectors $\Pi (\{i\})$ than $\rho _2$.
We can call $\overset{\sigma}\succ$ `more aligned' preorder.

\section{Approximation of a partially known density matrix by its Choquet integral}

We consider a  partially known density matrix $\rho$, with given $\{p(i)\}$,  where
$i=1,...,n$ and $d\le n\le d^2-1$.
The
\begin{eqnarray} 
R(\rho) =\frac{{\cal C}(\rho)}{{\rm Tr}[{\cal C}(\rho)]};\;\;\;{\rm Tr}[R(\rho)]=1,
\end{eqnarray}
 is a positive semi-definite Hermitian matrix with trace $1$, i.e., a density matrix.
We show that $R(\rho)$ is a good (in the sense discussed below) approximation to $\rho$.
Using Eq.(\ref{010}) we express $R(\rho)$ as
\begin{eqnarray}
R(\rho)=\sum_{i=n-d+1}^{n}{\widehat p} [\sigma(i)]\varpi _\sigma (i);\;\;\;{\widehat p} [\sigma(i)]=\frac{p [\sigma(i)]}{{\rm Tr}[{\cal C}(\rho)]};\;\;\;
\sum_{i=n-d+1}^{n}{\widehat p} [\sigma(i)]=1.
\end{eqnarray}
If $\Theta$ is an arbitrary operator
\begin{eqnarray}\label{nnn}
{\rm Tr}[R(\rho) \Theta]=\sum_{i=n-d+1}^{n}{\widehat p} [\sigma(i)]{\rm Tr}[\varpi _\sigma (i)\Theta].
\end{eqnarray}
\begin{proposition}\label{PRO34}
Let $\rho$ be a density matrix with ranking permutation $\sigma$ and 
\begin{eqnarray}\label{nnn}
{\mathfrak P}(j)={\rm Tr}[R(\rho) \Pi (j)].
\end{eqnarray}
Then
\begin{eqnarray}\label{nnn}
{\mathfrak P}[\sigma (n)]\ge {\mathfrak P}(i),
\end{eqnarray}
for all $i$.
\end{proposition}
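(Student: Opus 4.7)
The plan is to expand $R(\rho)$ in the orthogonal resolution of the identity $\{\varpi_\sigma(i)\}_{i=n-d+1}^n$ and recognise $\mathfrak{P}(j)$ as a convex combination of the sorted weights $\widehat{p}[\sigma(i)]$, whose maximum is attained at $i=n$.

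The first step is the observation that $\varpi_\sigma(n)=\pi_\sigma(n)=\Pi(\{\sigma(n)\})$, which follows directly from the definitions of the cumulative projectors and their discrete derivatives in Section \ref{ASD}. Thus the rank-one projector onto the ``most aligned'' reference state $\ket{\sigma(n)}$ itself appears as one of the eigenprojectors of $\mathcal{C}(\rho)$, carrying the largest weight $\widehat{p}[\sigma(n)]$.

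I would then compute $\mathfrak{P}[\sigma(n)]$ by substituting $R(\rho)=\sum_{i=n-d+1}^n\widehat{p}[\sigma(i)]\varpi_\sigma(i)$ into the definition and using the orthogonality relation $\varpi_\sigma(i)\varpi_\sigma(n)=\delta_{i,n}\varpi_\sigma(n)$ together with $\text{Tr}[\varpi_\sigma(n)]=1$. A short calculation yields $\mathfrak{P}[\sigma(n)]=\widehat{p}[\sigma(n)]$. For an arbitrary index $j$, I would write $\mathfrak{P}(j)=\sum_{i=n-d+1}^n\widehat{p}[\sigma(i)]\,t_i$ with $t_i:=\text{Tr}[\varpi_\sigma(i)\Pi(\{j\})]$. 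The $t_i$ are non-negative, as the trace of the product of two positive semi-definite operators, and they sum to $\text{Tr}[\Pi(\{j\})]=1$ because $\sum_i\varpi_\sigma(i)=\mathbf{1}$ and $\Pi(\{j\})$ has unit trace. Hence $\{t_i\}$ is a probability distribution on the index set, and $\mathfrak{P}(j)$ is a convex combination of the ordered values $\widehat{p}[\sigma(n-d+1)]\le\cdots\le\widehat{p}[\sigma(n)]$, which is therefore bounded above by the largest entry $\widehat{p}[\sigma(n)]=\mathfrak{P}[\sigma(n)]$.

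The only real subtlety is the initial identification $\varpi_\sigma(n)=\Pi(\{\sigma(n)\})$, which pins down the ``top'' eigenprojector of $\mathcal{C}(\rho)$ as the rank-one projector onto the maximally weighted reference state; once this is in place, the remainder reduces to the elementary fact that any probability-weighted average of sorted real numbers is at most their maximum.
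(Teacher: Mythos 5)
Your proof is correct, but it travels a different road from the paper's. The paper works from the telescoping form of the Choquet integral, Eq.~(\ref{00}): it multiplies that expression by $\Pi[\{\sigma(j)\}]$, takes the trace, and bounds the result above by $p[\sigma(n)]$ using the non-negativity of the increments $p[\sigma(i+1)]-p[\sigma(i)]$ together with $0\le {\rm Tr}\{\pi_\sigma(i)\Pi[\{\sigma(j)\}]\}\le 1$, the telescoping sum collapsing to $p[\sigma(n)]$; equality for $j=\sigma(n)$ then follows from $\pi_\sigma(i)\Pi[\{\sigma(n)\}]=\Pi[\{\sigma(n)\}]$. You instead work from the spectral form, Eq.~(\ref{010}): since the $\varpi_\sigma(i)$ are a complete set of orthogonal projectors summing to ${\bf 1}$, the overlaps $t_i={\rm Tr}[\varpi_\sigma(i)\Pi(\{j\})]$ form a probability distribution, so ${\mathfrak P}(j)$ is a convex combination of the ordered eigenvalues ${\widehat p}[\sigma(i)]$ and is bounded by the largest one, which is attained exactly at $j=\sigma(n)$ because $\Pi(\{\sigma(n)\})=\varpi_\sigma(n)$ is itself the top eigenprojector. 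Both arguments encode the same underlying fact --- that ${\rm Tr}[{\cal C}(\rho)P]$ for a rank-one projector $P$ cannot exceed the largest eigenvalue of ${\cal C}(\rho)$ --- but your version makes the mechanism more transparent and gives slightly more for free: the same convexity argument immediately yields the lower bound ${\mathfrak P}(j)\ge {\widehat p}[\sigma(n-d+1)]$ as well. All the ingredients you invoke ($\varpi_\sigma(n)=\Pi[\{\sigma(n)\}]$, orthogonality and completeness of the $\varpi_\sigma(i)$, non-negativity of the trace of a product of positive semi-definite operators) are established in section~\ref{ASD}, so there is no gap.
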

\begin{proof}

We multiply Eq.(\ref{00}) by $\Pi [\{\sigma (j)\}]$ and we get the trace:
\begin{eqnarray}\label{90}
{\rm Tr}\{{\cal C}(\rho))\Pi [\{\sigma (j)\}]\}&=&p [\sigma(n-d+1)]){\rm Tr}\{\Pi [\{\sigma (j)\}]\}\nonumber\\&+&\{p[\sigma (n-d+2)]-p [\sigma (n-d+1)]\}{\rm Tr}\{\pi_\sigma(n-d+2))\Pi [\{\sigma (j)\}]\}+....\nonumber\\
&+&\{p[\sigma (n)]-p [\sigma (n-1)]\}{\rm Tr}\{\pi_\sigma (n))\Pi [\{\sigma (j)\}]\}
\end{eqnarray}
Since
\begin{eqnarray}
{\rm Tr}\{\Pi [\{\sigma (j)\}]\}=1;\;\;\;p[\sigma (i+1)]-p [\sigma (i)]\ge 0\;\;\;\;0\le {\rm Tr}\{\pi_\sigma(i)\Pi [\{\sigma (j)\}]\}\le 1,
\end{eqnarray}
we get
\begin{eqnarray}\label{PP1}
{\rm Tr}\{{\cal C}(\rho))\Pi [\{\sigma (j)\}]\}&\le &p [\sigma(n-d+1)])+\{p[\sigma (n-d+2)]-p [\sigma (n-d+1)]\}+....\nonumber\\
&+&\{p[\sigma (n)]-p [\sigma (n-1)]\}= p[\sigma (n)]
\end{eqnarray}
For $j=n$ this inequality becomes equality. Indeed,  $\pi_\sigma(i)\Pi [\{\sigma (n)\}]=\Pi [\{\sigma (n)\}]$ and  Eq.(\ref{90}) with $j=n$  gives
\begin{eqnarray}\label{PP2}
{\rm Tr}\{{\cal C}(\rho)\Pi [\{\sigma (n)\}]\}= p[\sigma (n)]
\end{eqnarray}
Combining Eqs(\ref{PP1}), (\ref{PP2})  we get ${\mathfrak P}[\sigma (n)]\ge {\mathfrak P}(i)$.

\end{proof}

The above proposition shows that from all ${\mathfrak P}(i)$ the ${\mathfrak P}[\sigma (n)]$ has the maximum value.
Similarly from all $p(i)$ the $p[\sigma (n)]$ has the maximum value.
Therefore both $\rho$ and $R(\rho)$ `live' primarily in  the same subspace $H[\{\sigma (n)\}]$ and to a lesser extend in the other subspaces $H(\{i\})$.
In this case we say that $\rho$ and $R(\rho)$ are weakly comonotonic.
\begin{definition}
Let  $\rho_1, \rho _2$ be density matrices with ranking permutations $\sigma _1$ and $\sigma _2$, correspondingly.
 $\rho_1, \rho _2$ are weakly comonotonic if $\sigma _1(n)=\sigma_2(n)$.
\end{definition}
For comonotonic density matrices we have $\sigma _1(i)=\sigma_2(i)$ for all $i$, and therefore weak comonotonicity is a weaker concept than comonotonicity.

$R(\rho)$ is a good approximation of $\rho$, in the sense that they are weakly comonotonic to each other.
The quantity
\begin{eqnarray}\label{error}
{\mathfrak E}=\frac{1}{n}\sum _{j=1}^n[{\mathfrak P}(j)-p(j)]=\frac{1}{n}\sum _{j=1}^n{\rm Tr}\{[R(\rho)-\rho] \Pi (j)\}
\end{eqnarray}
is the average value of ${\mathfrak P}(j)-p(j)$, and it
can be used as an indicator of the error of the approximation.
We compare here the $p(j)$ (which is all the information we have about $\rho$) with the corresponding quantities ${\mathfrak P}(j)$ for $R(\rho)$ (which is the approximation that we use for $\rho$). 

\subsection{Example}\label{EXA89}

In the $3$-dimensional Hilbert space $H(3)$, we consider the following $n=4$ vectors:
\begin{eqnarray}\label{bn}
\ket{1}=
\begin{pmatrix}
1\\
0\\
0\\
\end{pmatrix};\;\;\;
\ket{2}=\frac{1}{\sqrt 2}
\begin{pmatrix}
1\\
0\\
1\\
\end{pmatrix};\;\;\;
\ket{3}=\frac{1}{\sqrt 2}
\begin{pmatrix}
0\\
1\\
1\\
\end{pmatrix};\;\;\;
\ket{4}=\frac{1}{3}
\begin{pmatrix}
2\\
1\\
2\\
\end{pmatrix},
\end{eqnarray}
In this case
\begin{eqnarray}\label{PR1}
\Pi (\{1\})=
\begin{pmatrix}
1&0&0\\
0&0&0\\
0&0&0\\
\end{pmatrix};\;\;
\Pi (\{2\})=\frac{1}{2}
\begin{pmatrix}
1&0&1\\
0&0&0\\
1&0&1\\
\end{pmatrix};\;\;
\Pi (\{3\})=\frac{1}{2}
\begin{pmatrix}
0&0&0\\
0&1&1\\
0&1&1\\
\end{pmatrix};\;\;
\Pi (\{4\})=\frac{1}{9}
\begin{pmatrix}
4&2&4\\
2&1&2\\
4&2&4\\
\end{pmatrix}.
\end{eqnarray}
The minimum and maximum  eigenvalues of $Q=\frac{1}{4}[\Pi (\{1\})+\Pi (\{2\})+\Pi (\{3\})+\Pi (\{4\})]$, are ${\cal E}_{\rm min}=0.04$ and ${\cal E}_{\rm max}=0.71$.

We assume that measurements with these projectors on a density matrix $\rho$ give
\begin{eqnarray}
p(1)=0.2;\;\;\;p(2)=0.7;\;\;\;p(3)=0.4;\;\;\;p(4)=0.5
\end{eqnarray}
These values obey the contraint in Eq.(\ref{BB}).
As we mentioned earlier, different ensembles of the same density matrix are used in these measurements (because the projectors do not commute).

We express the density matrix $\rho$ as
\begin{eqnarray}\label{P1}
\rho =
\begin{pmatrix}
\rho _1&a_1+ib_1&a_2+ib_2\\
a_1-ib_1&\rho_2&a_3+ib_3\\
a_2-ib_2&a_3-ib_3&1-\rho_1-\rho_2\\
\end{pmatrix}
\end{eqnarray}
in terms of  $8$ real variables, and we get
\begin{eqnarray}
&&p(1)=\rho_1=0.2\nonumber\\
&&p(2)=\frac{1}{2}(1-\rho_2)+a_2=0.7\nonumber\\
&&p(3)=\frac{1}{2}(1-\rho_1)+a_3=0.4\nonumber\\
&&p(4)=\frac{1}{9}(4-3\rho_2+4a_1+8a_2+4a_3)=0.5
\end{eqnarray}
Therefore the $\rho _2, b_1, b_2, b_3$ take arbitrary real values, and  
\begin{eqnarray}\label{P2}
\rho_1=0.2;\;\;\;a_1=-\frac{1.1+\rho_2}{4};\;\;\;a_2=\frac{0.4+\rho_2}{2};\;\;\;a_3=0.
\end{eqnarray}
Extra constraints are required for $\rho$ to be positive-semidefinite matrix, and it is not always easy to find these constraints.
In this paper we propose that the Choquet integral (which is a positive semi-definite matrix) can be used as an approximation to these density matrices.

In this example $p(1)<p(3)<p(4)<p(2)$, and the ranking permutation is 
\begin{eqnarray}
\sigma (1)=1;\;\;\sigma (2)=3;\;\;\sigma (3)=4;\;\;\sigma (4)=2.
\end{eqnarray}
The cumulative projectors related to this order are
\begin{eqnarray}
\pi_\sigma (4)=\Pi(\{2\});\;\;
\pi_\sigma(3)= \Pi[\{4,2\}];\;\;
\pi_\sigma(2)= {\bf 1};\;\;
\pi_\sigma(1)= {\bf 1}.
\end{eqnarray}
Using Eq.(\ref{670A}), we find
\begin{eqnarray}\label{PR2}
&&\Pi (\{4,2\})=\frac{1}{2}
\begin{pmatrix}
1&0&1\\
0&2&0\\
1&0&1\\
\end{pmatrix}.
\end{eqnarray}
Consequently
\begin{eqnarray}
{\cal C}(\rho)&=&p(3){\bf 1}+[p(4)-p(3)]\Pi(\{4,2\})+[p(2)-p(4)]\Pi(\{2\})=
\begin{pmatrix}
0.55&0&0.15\\
0&0.50&0\\
0.15&0&0.55\\
\end{pmatrix}
\end{eqnarray}
and 
${\rm Tr}[{\cal C}(\rho)]=1.6$. In this case the minimum and maximum  eigenvalues of $Q_{\sigma}=\frac{1}{3}[\Pi (\{2\})+\Pi (\{3\})+\Pi (\{4\})]$ are
${\cal E}_{\rm min}=0.007$ and ${\cal E}_{\rm max}=0.816$ and obey the constraint in Eq.(\ref{BBB}).

From this we get
\begin{eqnarray}
R(\rho)=
\begin{pmatrix}
0.34&0&0.09\\
0&0.32&0\\
0.09&0&0.34\\
\end{pmatrix}.
\end{eqnarray}
Using this density matrix we find
\begin{eqnarray}
{\mathfrak P}(1)=0.34;\;\;\; {\mathfrak P}(2)=0.43;\;\;\; {\mathfrak P}(3)=0.32;\;\;\; {\mathfrak P}(4)=0.42
\end{eqnarray}
and calculate the error ${\mathfrak E}=-0.07$.

We note that among the $\{p(i)\}$ the $p(2)$ is the largest, and among the   $\{{\mathfrak P}(i)\}$ the  ${\mathfrak P}(2)$ is the largest.
Therefore $\rho$ and $R(\rho)$ are weakly comonotonic to each other. They are both closer to  the same subspace $H(\{2\})$ (related to the vector $\ket{2}$ in Eq.(\ref{bn})) and to a lesser extend to the other subspaces 
$H(\{i\})$.

\section{Discussion}

In many cases the additivity property of Kolmogorov probabilities in Eq.(\ref{B}) does not hold.
The weaker concept of capacity or non-additive probability obeys Eq.(\ref{A}).
Additivity is the starting point of standard integration theory, and in the case of capacities is replaced by the Choquet integration.

Choquet integrals in a classical context are defined  in Eq.(\ref{87}), in terms of cumulative functions and their derivatives.
In general ${\cal C}(f+g;\mu)\ne {\cal C}(f;\mu)+{\cal C}(g;\mu)$, and only in the case of comonotonic functions this becomes equality.
The Choquet integral  ${\cal C}(f;\mu)$ is expressed in terms of M\"obius transforms of the capacities, in section \ref{MOB}.
An example is given in section \ref{EXA23}.

In a quantum context the non-additivity is expressed in terms of non-orthogonal projectors (which are intimately related to probabilities) in Eq.(\ref{2A}).
Cumulative projectors and their discrete derivatives, are studied in section \ref{ASD}, and they are used to define Choquet integrals in Eqs.(\ref{010}),(\ref{00}).
Here the Choquet integrals ${\cal C}(\rho)$ are positive semi-definite matrices.
Upper and lower bounds for the ${\rm Tr}[{\cal C}(\rho)]$ are given in Eq.(\ref{BBB}).
It has also been shown that  ${\rm Tr}[{\cal C}(\rho)]$ is a convex function.
The Choquet integral  ${\cal C}(\rho)$ is expressed in terms of M\"obius operators, in section \ref{MMM}.

In general ${\cal C}[\lambda \rho _1 +(1-\lambda)\rho_2]$ is different from $\lambda{\cal C}(\rho_1)+(1-\lambda){\cal C}(\rho _2)$.
Only for comonotonic density matrices (which can be interpreted as physically similar density matrices) this becomes equality.
Comonotonicity splits the set of density matrices into equivalence classes, and within each of these classes we 
define a total preorder.

As an application we consider a partially known density matrix $\rho$  with given 
$p(i)={\rm Tr}[\rho \Pi(i)]$  (where $i=1,...,n$ and $d\le n\le d^2-1$).
The ${\cal C}(\rho)/{\rm Tr}[{\cal C}(\rho)]$ is a good approximation to $\rho$,
in the sense that they are weakly comonotonic to each other.
The ${\mathfrak E}$ in Eq.(\ref{error}) quantifies the error in this approximation. 

The work uses non-orthogonal total sets of states (Eq.(\ref{sig})) for the study of physical problems.

\newpage

\end{document}